\newcommand{\D}{{\mathrm{d}}}
\newtheorem{theorem}{Theorem}
\newtheorem{lemma}{Lemma}
\newtheorem{corollary}{Corollary}
\newtheorem{remark}{Remark}
\begin{document}

\begin{frontmatter}

\title{New universal Lyapunov functions\\ for non-linear reaction networks}
\author{Alexander N. Gorban}
\ead{a.n.gorban@le.ac.uk}
\address{Department of Mathematics, University of Leicester, Leicester, LE1 7RH, UK \\ and Lobachevsky University, Nizhny Novgorod, Russia}

\begin{abstract}
In 1961, R\'enyi discovered a rich family of non-classical Lyapunov functions for kinetics of the Markov chains,
or, what is the same, for the linear kinetic equations. This family was parameterized by convex functions on the positive semi-axis. After works of Csisz\'ar and Morimoto,
these functions became widely known as $f$-divergences or the Csisz\'ar--Morimoto divergences.
These Lyapunov functions  are universal in the following sense: they depend only on the state of equilibrium,
not on the kinetic parameters themselves.

Despite many years of research, no such wide family of universal Lyapunov functions has been found for nonlinear reaction networks. For general non-linear networks with detailed or complex balance, the classical thermodynamics potentials remain the only universal Lyapunov functions.

We constructed a rich family of new universal Lyapunov functions for {\em any
non-linear reaction network} with detailed or complex balance.
These functions are parameterized by compact subsets of the projective space.
They are universal in the same sense: they depend only on the state of equilibrium and on the network structure, but not on the kinetic parameters themselves.

The main elements and operations in the construction of the new Lyapunov functions
are partial equilibria of reactions and convex envelopes of families of functions.

\end{abstract}

\begin{keyword}
reaction networks; non-linear kinetics;  Lyapunov function; partial equilibrium; detailed balance
\end{keyword}

\date{}

\end{frontmatter}

\section{Introduction}

The classical Lyapunov functions in kinetics are closely related to the concepts of entropy and free energy. The first example was provided by Boltzmann in 1872 \cite{Boltzmann1872}. He  proved that in an isolated system the functional 
\begin{equation}
H(f)=\int f(x,v)\ln(f(x,v)) \D^3 v \D^3x
\end{equation}
does not increase in time due to kinetic equation (the Boltzmann equation). Here, $f(x,v)$ is the distribution density of particles in the phase space, $x$ is position in space, $v$ is velocity of a particle. Boltzmann's proof \cite{Boltzmann1872} used the principle of detailed balance. Several years later he found more general conditions for $H$-theorem  \cite{Boltzmann1887} and invented what we call now semidetailed balance or cyclic balance or complex balance (for modern theory of chemical reaction networks  we refer to tutorial \cite{Angeli2009}).

For perfect chemical mixtures with components $A_1, \ldots, A_n$ in isothermal isochoric conditions (fixed volume) the analogue of Boltzmann's $H$-function is:
\begin{equation}\label{LyapFreeEN}
H=\sum_{i=1}^n c_i \left(\ln\left(\frac{c_i}{c_i^{\rm eq}}\right)-1 \right),
\end{equation}
where $c_i\geq 0$ is the concentration of $A_i$ and $c_i^{\rm eq}>0$ is an equilibrium concentration of $A_i$ (under the standard convention that $x\ln x=0$ for $x=0$). The idealization of perfect mixtures is applicable to rarefied gases or to reactions of small admixtures  in solutions.

The $H$-function  (\ref{LyapFreeEN})  is a Lyapunov function for {\em all} mass action law systems with detailed or complex balanced equilibrium (see e.g. \cite{Angeli2009})). We call this property `universality'.

In information theory, the function $H$ appears as a measure of relative information (in the distribution $c_i$ with respect to the distribution $c_i^{\rm eq}$) and analogue of the $H$-theorem states that random manipulations with data decrease the relative information with respect to the equilibrium that does not change under manipulations \cite{Shannon1948, Cohen1993, CohenIwasa1993}.

It is not much surprising that the $H$ function (\ref{LyapFreeEN}) is essentially the only universal Lyapunov function for all imaginable perfect kinetic systems with detailed balance. Nevertheless, if we restrict the choice of  the reaction mechanism then the class of Lyapunov functions, which are conditionally independent of reaction rate constants for a given detailed balanced or complex balanced equilibrium, can be extended. We   call such Lyapunov functions conditionally universal (for a given reaction mechanism).

In 1961,  R\'enyi discovered a class of conditionally universal Lyapunov functions for Markov chains \cite{Renyi1961}. After works \cite{Csiszar1963, Morimoto1963} these functions were studied by many authors under the name {\em $f$-divergences} or Csisz\'ar--Morimoto divergences. It is known that any  universal Lyapunov functions for Markov chains has the form of $f$-divergence \cite{ENTR3, Amari2009,Gorban2Judge2010} or is a monotonic function of such a divergence. The continuous time Markov kinetic equation coincide with the kinetic equations for linear (monomolecular) reactions of perfect systems, and $f$-divergences can be considered as a direct generalization of (\ref{LyapFreeEN}):
\begin{equation}\label{F-div}
H_f(c)=\sum_i c_i^{\rm eq}f\left(\frac{c_i}{c_i^{\rm eq}}\right),
\end{equation}
where $f$ is a convex function on the positive semi-axis.

For $f(x)=x\ln x$ and after adding a constant  term proportional to $\sum_i c_i$ we get the classical formula (\ref{LyapFreeEN}). (Recall that $\sum_i c_i$ does not change in linear kinetics.)

Existence of a very rich family of conditionally universal Lyapunov functions for the linear reaction mechanisms makes us guess that there should be many conditionally universal Lyapunov functions for any given nonlinear reaction mechanism as well. In this paper, we construct new conditionally universal Lyapunov functions for any given reaction mechanism using partial equilibria of all single reactions and detailed or complex balance conditions.

In the next Sec. \ref{Prerequisit}, we give the necessary formal definitions and introduce notations for mass action law systems. The necessary and sufficient conditions that a convex function is a Lyapunov function for all reaction networks with given reaction mechanism and equilibrium point under detailed or complex balance assumption are proven in Sec.\ref{Sec:GenTheorem}. The construction of  a new family of conditionally universal Lyapunov functions for any reaction network  is  presented and the main result, Theorem \ref{HTheoremNew}, is proven for mass action law systems in Sec. \ref{secNewMAL}. In Sec. \ref{General:Sec} we  outline the possible generalizations and applications of the results.  In Conclusion the main results of the work  are summarized and an open question is formulated.

\section{Prerequisites: mass action law and classical Lyapunov functions \label{Prerequisit}}

In this section, we formally introduce mass action law and equations of chemical kinetics. For more detailed introduction, including thermodynamical backgrounds, detailed kinetics, applied kinetics, and mathematical aspect of kinetics, we refer to the modern book \cite{MarinYablonsky2019}. Tutorial \cite{Angeli2009} gives the mathematical introduction in dynamics of chemical reaction networks. Formalism of chemical kinetics with special attention to heterogeneous catalysis is discussed in detail in the monograph \cite{Yab}.

\subsection{Mass action law}

Consider a closed system with $n$ chemical species $A_1,\ldots, A_n$, participating in a complex reaction network. The  reaction network is represented   in the form of the system of
{\em stoichiometric equations} of elementary reactions (called also {\em reaction mechanism}):
\begin{equation}\label{ReactMech}
\sum_{i=1}^n \alpha_{ri} A_i \to \sum_{j=1}^n \beta_{rj} A_j \;\; (r=1, \ldots, m) \, ,
\end{equation}
where  $\alpha_{ri}\geq 0$, $\beta_{rj}\geq 0$ are the stoichiometric coefficients, $r=1,\ldots, m$, $i,j=1, \ldots, n$, $m$ is the number of elementary reactions, $n$ is the number of components. In this representation, the direct and reverse elementary reactions are considered separately.

The {\em stoichiometric vector} $\gamma_r$ of the elementary
reaction is $\gamma_r=(\gamma_{ri})$, $\gamma_{ri}=\beta_{ri}-\alpha_{ri}$ (`gain minus loss'). The gain vector is $\alpha_r=(\alpha_{ri})$ and the loss vector is $\beta_r=\beta_{ri}$.

Elementary reactions of the form $A_i\to A_j$ are called linear or monomolecular reactions.

According to the {\em  mass action law}, the reaction rate for the elementary reactions (\ref{ReactMech}) are
\begin{equation}\label{GenMAL}
w_r=k_r \prod_{i=1}^n c_i^{\alpha_{ri}} ,
\end{equation}
where $k_r\geq 0$ is the {\em reaction rate constant} and the standard convention is used: for any $x\geq 0$, $x^0=1$.

The kinetic equations for a perfect system in isochoric isothermal conditions have the form
\begin{equation}\label{kinurChem}
\frac{\D c}{\D t}=\sum_{r=1}^m \gamma_r w_r
\end{equation}

The {\em stoichiometric subspace} is $Span\{\gamma_r |r=1, \ldots, m\}$. For any set of values  $k_r\geq 0$ and $c_i \geq 0$ the time derivative of $c$ belongs to the stoichiometric subspace
$$\frac{\D c}{\D t} \in Span\{\gamma_r |r=1, \ldots, m\}.$$
Therefore, each linear functional $b(c)$ that annuls  $Span\{\gamma_r |r=1, \ldots, m\}$ (i.e. $b(\gamma_r)=0$ for all $r$ is the conservation law:
 $\D b(c)/\D t=0$ according to (\ref{kinurChem}). Such functionals are called {\em stoichiometric conservation law}. 

We always assume that there exists a strictly positive stoichiometric conservation law that is such  a vector $b=(b_i)$, $b_i>0$ that $\sum_i b_i \gamma_{ri}=0$ for all $r$. In chemical kinetics, this may be the conservation of mass or of total number of atoms, for example. Due to this assumption, every stoichiometric  vector $\gamma$ has both positive and negative components. This assumption has many important consequences. For example, if 
\begin{equation}\label{alpha>1} 
\mbox{For all } i,r \mbox{ either }\alpha_{ri}=0 \mbox{ or }\alpha_{ri}\geq 1,
\end{equation}
 then the right hand side of (\ref{kinurChem}) is a Lipschitz function in any {\em reaction polyhedron}
\begin{equation}\label{reactpolyhedr}
(Span\{\gamma_r |r=1, \ldots, m\}+c) \cap \mathbb{R}^n_{\geq 0},
\end{equation}
where $c$ is an arbitrary vector with non-negative coordinates and $\mathbb{R}^n_{\geq 0}$ is the cone (orthant) of such vectors. Therefore, solution  of (\ref{kinurChem}) exists  and is unique for every initial state with non-negative coordinates. Moreover, it remains non-negative and, therefore, any reaction polyhedron (\ref{reactpolyhedr}) is positively invariant: solutions do not leave it in positive time. Algorithms for detailed analysis of the structure of reaction polyhedra are developed in \cite{Gorban1984}. It involves the classical double-description problem \cite{Chernikova1965,FukudaProdon2996}. Reaction polyhedra for various examples of chemical reactions are presented in \cite{GorbanKagan2006,GorbanSIADS2013,Pisarenko2018}. The boundedness of positive semi-trajectories has strong stability implications for reaction networks \cite{HangosSzeder2013}.

In many applications, the stoichiometric coefficients are non-negative integers. For global existence, uniqueness of solutions, and preservation of positivity the condition (\ref{alpha>1}) and existence of positive stoichiometric conservation law are sufficient. Moreover, in this work we do not need uniqueness and global existence of solutions and analyze  just the local conditions $\D G/ \D t \leq 0$  for various functions $G$ at strictly positive points $c$. Therefore, a weaker condition is needed: instead of (\ref{alpha>1}) we assume below that $\alpha_{ri}\geq 0$, $\beta_{ri} \geq 0$.	

\subsection{Detailed balance and classical Lyapunov function}

First proof of Boltzmann's $H$-theorem \cite{Boltzmann1872} and its analogues for chemical kinetics were based on detailed balance .  Boltzmann's argument were analyzed by Tolman \cite{Tolman1938}.  In general situation, this principle means that at equilibrium, each elementary process is equilibrated with its reverse process.

 To formulate this principle for chemical kinetics, we combine direct and reverse reactions together and rewrite the reaction mechanism in the form
\begin{equation}\label{ReactMechRev}
\sum_{i=1}^n \alpha_{ri} A_i \rightleftharpoons \sum_{j=1}^n \beta_{rj} A_j \;\; (r=1, \ldots, m) \, ,
\end{equation}
(formally, the transformation from (\ref{ReactMech}) to (\ref{ReactMechRev}) is always possible if we allow some of $k_r^-=0$ when there is no reverse reaction in the initial mechanism).

We use notation $k_r^+$ and $k_r^-$ for reaction rate constants of direct and reverse reactions, and  $w_r^+$, $w_r^-$ for reaction rates of these reactions:
\begin{equation}\label{GenMALDB}
w_r^+=k_r^+ \prod_{i=1}^n c_i^{\alpha_{ri}}, \; \;  w_r^-=k_r^- \prod_{i=1}^n c_i^{\beta_{ri}} ,
\end{equation}
For the total rate of the couple of direct  and reverse reaction we use 	$w_r=w_r^+-w_r^-$. In these notations the kinetic equations are the same (\ref{kinurChem}) (with different value of $m$).

The principle of detailed balance for the mass action law is:  there exists a positive equilibrium $c_i^{\rm eq}>0$ with
detailed balance, 
\begin{equation}\label{DetBal}
w_r^+(c_i^{\rm eq})=w_r^-(c_i^{\rm eq}) \mbox{ for all }r.
\end{equation} 
If the set of the stoichiometric vectors $\{\gamma_r\}$ is linearly dependent then this condition implies algebraic relations between reaction rate constants. 
Each elementary reaction is equilibrated at the point of detailed balance, $c^{\rm eq}$. For systems that obey the mass action law, this means that
\begin{equation}\label{delBalconst}
k_r^+ \prod_{i=1}^n (c_i^{\rm eq})^{\alpha_{ri}}=k_r ^-\prod_{i=1}^n (c_i^{\rm eq})^{\beta_{ri}}=w_r^{\rm eq}.
\end{equation}
It is convenient to use the detailed balance relations (\ref{delBalconst}) and introduce a set of independent parameters instead of the reaction rate constants: the equilibrium fluxes $w_r^{\rm eq}$ and the equilibrium concentrations $c_i^{\rm eq}$. The reaction rate constants have a simple and explicit expression through the equilibrium flows and equilibrium concentrations:
\begin{equation}\label{ChangeK}
k_r^+=w_r^{\rm eq}\left(\prod_{i=1}^n (c_i^{\rm eq})^{\alpha_{ri}}\right)^{-1}, \;
k_r ^-=w_r^{\rm eq}\left(\prod_{i=1}^n (c_i^{\rm eq})^{\beta_{ri}}\right)^{-1}.
\end{equation}
In this parameterization, the mass action law (\ref{GenMALDB}) with detailed balance condition takes the form:
\begin{equation}\label{GenMALMod}
w_r^+=w_r^{\rm eq} \prod_{i=1}^n \left(\frac{c_i}{c_i^{\rm eq}}\right)^{\alpha_{ri}} \,  w_r^-=w_r^{\rm eq} \prod_{i=1}^n \left(\frac{c_i}{c_i^{\rm eq}}\right)^{\beta_{ri}} ,
\end{equation}
In this parameterization, the kinetic equations for the mass action law with detailed balance are
\begin{equation}\label{reparKINUR}
\frac{\D c}{\D t}=\sum_{r=1}^m \gamma_r w_r^{\rm eq} \left[ \prod_{i=1}^n \left(\frac{c_i}{c_i^{\rm eq}}\right)^{\alpha_{ri}}- \prod_{i=1}^n \left(\frac{c_i}{c_i^{\rm eq}}\right)^{\beta_{ri}}\right].
\end{equation}

The classical $H$-theorem for isochoric isothermal perfect chemical systems with detailed balance can be produced now by simple straightforward calculations. Notice that 
\begin{equation}\label{Hgradient}
\frac{\partial H}{\partial c_i}=\ln\left(\frac{c_i}{c_i^{\rm eq}}\right)
\end{equation}
 and calculate the time derivative of $H$ according to the system of kinetic equations with detailed balance condition (\ref{reparKINUR}):
\begin{equation}\label{h-theorem}
\frac{\D H}{\D t}=-\sum_{r=1}^m  (w_r^+-w_r^-)(\ln w_r^+- \ln w_r^-)\leq 0,
\end{equation}
and ${\D H}/{\D t}=0$ for points of detailed balance an only for such points. Therefore, existence  of a positive point of detailed balance for the mass action law kinetic equation  implies that $H$ (\ref{LyapFreeEN}) does not increase in time and all the non-negative steady states of kinetic equations are detailed balance equilibria.

Definition of $H$ and  $c_i^{\rm eq}$ requires additional comments. Equilibrium concentrations of chemical mixture depend on the conserved quantities those do not change in the course of chemical reactions. For example, they depend on the atomic balances. Without fixing these values, the vector of positive equilibria $c_i^{\rm eq}$ ($i=1,\ldots,n$) is defined ambiguously. Any vector of positive equilibria $c_i^{\rm eq}>0$ can be used in the definition of $H$ (\ref{LyapFreeEN}). After that, the $H$-function (\ref{LyapFreeEN}) can be used for all values of $c_i\geq 0$ for all possible values of conserved quantities. It is a simple exercise to show that for the mass action law kinetics with detailed balance the difference between $H$-functions with different choices of equilibria $c_i^{\rm eq}$ does not change in time.

The $H$-function  (\ref{LyapFreeEN}) was utilised by Zeldovich in his proof of uniqueness of positive equilibrium for given values of conserved quantities (1938, reprinted in 1996 \cite{Zeld}). It was recognised as the main instrument for analysis of stability of perfect kinetic systems in 1960s-1970s \cite{ShapiroShapley1965, VolpertKhudyaev1985}. 

\subsection{Complex balance and Lyapunov functions}

In 1887, Lorentz stated that the collisions of polyatomic molecules are irreversible and, therefore, Boltzmann's $H$-theorem is not applicable to the polyatomic media \cite{Lorentz1887}. Boltzmann found the solution immediately and invented what we call now semidetailed balance or cyclic balance or complex balance \cite{Boltzmann1887}.

Now, it is proven that the Lorentz objections were wrong and the detailed balance conditions hold for polyatomic molecules \cite{CercignaniLampis1981}. Nevertheless, this discussion was seminal. The complex balance is a popular assumption in chemical kinetics beyond the detailed balance \cite{Horn1972,Feinberg1972}. The comparative analysis of detailed and complex balance assumption in practice of modeling of chemical reaction networks was presented in work \cite{Szederkenyi2011}. It is demonstrated how the generalized mass action law with complex balance appears as a macroscopic limit of the microscopic Markov kinetics \cite{GorbanShahzad2011}. The formal structures of complex balance are also useful for analysis of systems with time delays \cite{Hangos2018}.

Return to the `irreversible' representation of the reaction mechanism (\ref{ReactMech}), where direct and reverse reaction are considered separately. Let $c^{\rm eq}$ be a positive vector of concentrations (it will be a point of complex balance). Similarly to (\ref{ChangeK}) introduce new parameters
\begin{equation}
k_r=w_r^{\rm eq}\left(\prod_{i=1}^n (c_i^{\rm eq})^{\alpha_{ri}}\right)^{-1}
\end{equation}
$w_r^{\rm eq}$ is the rate of the $r$th reaction at the point $c^{\rm eq}$.
The kinetic equations with these parameters are 
\begin{equation}\label{KinUrComplBal}
\frac{\D c}{\D t}=\sum_{r=1}^m \gamma_r w_r^{\rm eq}\prod_{i=1}^n \left(\frac{c_i}{c_i^{\rm eq}}\right)^{\alpha_{ri}}.
\end{equation}

Let us calculate the time derivative of the function $H$ (\ref{LyapFreeEN}) by virtue of the system (\ref{KinUrComplBal}). 

\begin{equation}\label{entropyproductionGNEKIN}
\frac{\D H}{\D t}=\sum_i \frac{\partial H}{\partial c_i} \frac{\D c_i}{\D t} 
= \sum_{r }(\gamma_{r }, \nabla H)w_r^{\rm eq}\exp(\alpha_r, \nabla H),
\end{equation}
where $\nabla H$ is given by (\ref{Hgradient}) 

An auxiliary function $\theta(\lambda) $ of one variable $\lambda\in [0,1]$ is convenient for analysis of $\D H/ \D t$ (see \cite{Gorban1984,GorbanShahzad2011}):
\begin{equation}\label{auxtheta}
\theta(\lambda)=\sum_{\rho}w_r^{\rm eq}\exp[(\nabla H,(\lambda
\alpha_{r}+(1-\lambda)\beta_{r}))]
\end{equation}
With this function, ${\D H}/{\D t}$ defined by (\ref{entropyproductionGNEKIN}) has a
very simple form:
\begin{equation}\label{EntropProdtheta}
\frac{\D H}{\D t}=-\left.\frac{\D \theta(\lambda)}{\D
\lambda}\right|_{\lambda=1}
\end{equation}
The condition  $\theta(1)=\theta(0)$  is sufficient for the inequality $\theta'(1) \leq 0$, because  $\theta(\lambda)$ is a convex function. Hence, if $\theta(1)=\theta(0)$ then 
${\D H}/{\D t}\leq 0$. The explicit form of the condition   $\theta(1)=\theta(0)$ is
\begin{equation}\label{0=1}
\sum_{r}w_r^{\rm eq}\prod_{i=1}^n \left(\frac{c_i}{c_i^{\rm eq}}\right)^{\alpha_{ri}}=
\sum_{r}w_r^{\rm eq}\prod_{i=1}^n \left(\frac{c_i}{c_i^{\rm eq}}\right)^{\beta_{ri}}.
\end{equation}

Let us consider the family of all gain and loss vectors $\{\alpha_{r},\beta_{r}\}$ ($r=1, \ldots ,m$). Usually, some of these $2m$ vectors coincide. Assume that there are $q$ different vectors among them. Let $y_1, \ldots, y_q$ be these vectors. For each $j=1, \ldots, q$ we take
\begin{equation}
R_j^+=\{r | \,\alpha_{r}=y_j\}\, , \; R_j^-=\{r |
\,\beta_{r}=y_j\}
\end{equation}

We can rewrite  (\ref{0=1}) in the form
\begin{equation}\label{0=1incomplex}
\sum_{j=1}^q \prod_{i=1}^n \left(\frac{c_i}{c_i^{\rm eq}}\right)^{y_{ji}}\left[\sum_{r\in R_j^+}w_r^{\rm eq}- \sum_{r\in R_j^-}w_r^{\rm eq}\right]=0
\end{equation}
The monomials $$\prod_{i=1}^n \left(\frac{c_i}{c_i^{\rm eq}}\right)^{y_{ji}}$$ are linearly independent functions of $c$ for every finite set of vectors $y_j$. Therefore for any $j=1, \ldots, q$
\begin{equation}\label{complexbalanceGENKIN}
\sum_{r\in R_j^+}w_r^{\rm eq}- \sum_{r\in R_j^-}w_r^{\rm eq}=0
\end{equation}
This {\em complex balance condition} is equivalent to the condition $\theta(0)=\theta(1)$ and is sufficient for the inequality ${\D H}/{\D t}\leq 0$. Therefore, $H$ is a Lyapunov function for systems with complex balance.

\section{\label{Sec:GenTheorem}General $H$ theorem for perfect systems with detailed and complex balance}

The systems with detailed or complex balance have a classical Lyapunov function (\ref{LyapFreeEN}) but we are interested in construction of new Lyapunov functions. In this section, we find necessary and sufficient conditions that a convex function is a conditionally universal Lyapunov function for the reaction network with given reaction mechanism and equilibrium point under detailed or complex balance assumption.

Let us start from the systems with detailed balance. By definition, a function $G(c,c^{\rm eq})$ is a conditionally universal Lyapunov function for this reaction network if it is a Lyapunov function for system (\ref{reparKINUR}) for every set of non-negative  values of the equilibrium fluxes  $w_r^{\rm eq} $.

The following lemma is a simple consequence of the form of the kinetic equations (\ref{reparKINUR}). Let $G(c,c^{\rm eq})$ be continuous function and  a convex function of $c$ for all $c^{\rm eq}$.
\begin{lemma}\label{splittingLemma}
Time derivative of a function $G(c,c^{\rm eq})$  by virtue of system (\ref{reparKINUR}) is non-negative, $\D G(c,c^{\rm eq})/\D t\leq 0$, for all positive values of $ w_r^{\rm eq}$, $c$, and  $c^{\rm eq}$ if and only if the derivatives of this function by virtue of the following systems are non-negative for all $r$, all positive   $c$ and given $c^{\rm eq}$:
\begin{equation}\label{1DKINUR}
\frac{\D c}{\D t}= \gamma_r \left[ \prod_{i=1}^n \left(\frac{c_i}{c_i^{\rm eq}}\right)^{\alpha_{ri}}- \prod_{i=1}^n \left(\frac{c_i}{c_i^{\rm eq}}\right)^{\beta_{ri}}\right].
\end{equation}
\end{lemma}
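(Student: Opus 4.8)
The plan is to exploit the fact that, for a fixed $c^{\rm eq}$, the right-hand side of (\ref{reparKINUR}) is a non-negative linear combination of the single-reaction vector fields appearing in (\ref{1DKINUR}). Writing $v_r(c)=\gamma_r[\prod_i(c_i/c_i^{\rm eq})^{\alpha_{ri}}-\prod_i(c_i/c_i^{\rm eq})^{\beta_{ri}}]$ for the vector field of the $r$-th single-reaction system, the full vector field is simply $\sum_r w_r^{\rm eq}v_r(c)$. Since the operation $G\mapsto \D G/\D t$ along an ODE is, at any point where $G$ is differentiable in $c$, the linear functional $\dot c\mapsto(\nabla_c G,\dot c)$, the time derivative of $G$ by virtue of (\ref{reparKINUR}) decomposes exactly as
\begin{equation}
\left.\frac{\D G}{\D t}\right|_{(\ref{reparKINUR})}=\sum_{r=1}^m w_r^{\rm eq}\,(\nabla_c G,v_r(c))=\sum_{r=1}^m w_r^{\rm eq}\left.\frac{\D G}{\D t}\right|_{(\ref{1DKINUR}),r},
\end{equation}
where the last factor is the derivative of $G$ by virtue of the single-reaction system (\ref{1DKINUR}) for reaction $r$. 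This linearity in the weights $w_r^{\rm eq}$ is the whole engine of the lemma.

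Given this decomposition, the sufficiency (``if'') direction is immediate: if every single-reaction derivative is non-positive, then, because all $w_r^{\rm eq}\ge 0$, the weighted sum is non-positive for every admissible choice of equilibrium fluxes. For the necessity (``only if'') direction I would fix $c$ and $c^{\rm eq}$ and abbreviate $s_r=(\D G/\D t)|_{(\ref{1DKINUR}),r}$, so that the hypothesis reads $\sum_r w_r^{\rm eq}s_r\le 0$ for every point $w^{\rm eq}$ of the open positive orthant of $\mathbb{R}^m$. A linear functional that is non-positive throughout the open positive orthant must have all coefficients non-positive; concretely, fixing an index $r_0$, setting $w_{r_0}^{\rm eq}=1$ and letting the remaining weights shrink to $0$ forces $s_{r_0}\le 0$ in the limit. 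As $r_0$ is arbitrary, each single-reaction derivative is non-positive, which is exactly the claimed condition.

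The only genuine subtlety, and the step I expect to require care, is that the Lyapunov functions built later by convex envelopes need not be differentiable everywhere, so the identity $\D G/\D t=(\nabla_c G,\dot c)$ can fail at kinks. At such points I would replace the gradient by the one-sided directional derivative $D_vG(c)=\lim_{t\to0^+}[G(c+tv)-G(c)]/t$, which exists for every direction $v$ precisely because $G$ is convex in $c$. Directional derivatives of a convex function are positively homogeneous and subadditive in $v$, so the exact decomposition above degrades to the inequality $\D G/\D t|_{(\ref{reparKINUR})}\le\sum_r w_r^{\rm eq}\,(\D G/\D t)|_{(\ref{1DKINUR}),r}$, which is still enough for sufficiency. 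For necessity one uses in addition that $v\mapsto D_vG(c)$, being a finite sublinear function, is continuous, which legitimises the limiting argument ($w_{r_0}^{\rm eq}=1$, $w_{r\ne r_0}^{\rm eq}\to0$) even when $G$ is not smooth. With these substitutions the argument goes through for all continuous, $c$-convex $G$.
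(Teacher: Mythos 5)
Your proof is correct and follows essentially the same route as the paper: both decompose the time derivative along (\ref{reparKINUR}) as the conical combination $\sum_{r} w_r^{\rm eq} D_r G$ of the single-reaction derivatives and observe that non-positivity of all such combinations, with $w^{\rm eq}$ ranging over the positive orthant, is equivalent to non-positivity of each term. Your handling of the non-smooth case via one-sided directional derivatives and sublinearity is a slightly more explicit rendering of the paper's appeal to subgradients, but it is the same idea.
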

\begin{proof} For smooth $G(c,c^{\rm eq})$, calculate the time derivative of this function by virtue of the  system (\ref{reparKINUR}):
\begin{equation}\label{DG/Dt}
\begin{split}
\frac{\D G(c,c^{\rm eq})}{\D t}&=\sum_{r=1}^m  w_r^{\rm eq} (\nabla_c G(c,c^{\rm eq}), \gamma_r) \left[ \prod_{i=1}^n \left(\frac{c_i}{c_i^{\rm eq}}\right)^{\alpha_{ri}}- \prod_{i=1}^n \left(\frac{c_i}{c_i^{\rm eq}}\right)^{\beta_{ri}}\right] \\
&=\sum_{r=1}^m  w_r^{\rm eq} D_r  G(c,c^{\rm eq}),
\end{split}
\end{equation}
where $ (\nabla_c G(c,c^{\rm eq}), \gamma_r)$ is the standard inner product (derivative of $G(c,c^{\rm eq})$ in  the direction $\gamma_r$) and $ D_r  G(c,c^{\rm eq})$ is the time derivative of  $G(c,c^{\rm eq})$  by virtue of the system (\ref{1DKINUR}).

The coefficients $ w_r^{\rm eq}$ are independent non-negative variables. Therefore, the time derivative of  $G(c,c^{\rm eq})$  by virtue of the  original system (\ref{reparKINUR}) is a conical combination of $ D_r  G(c,c^{\rm eq})$. Non-negativity of all conical combinations of  $ D_r  G(c,c^{\rm eq})$ means that each term is non-negative. Thus, non-negativity of ${\D G(c,c^{\rm eq})}/{\D t}$ for the reaction network with any values of equilibrium fluxes is equivalent to non-negativity of ${\D G(c,c^{\rm eq})}/{\D t}$ for each single reaction subsystem of the network (i.e.  to the inequality  $D_r  G(c,c^{\rm eq})\geq 0$ for all $r$). For continuous convex functions $G(c,c^{\rm eq})$ we have to use in (\ref{SmoothUniversal}) the {\em subgradients}  of $G(c,c^{\rm eq})$ instead of the gradients. (Recall that according to A.D. Alexandrov theorem continuous convex functions are almost everywhere twice differentiable and, therefore, the criterion (\ref{SmoothUniversal}) with classical gradients is valid for them almost everywhere.) The time derivative in (\ref{DG/Dt}) becomes an interval of non-positive numbers and the proof remains the same. For more detail about continuous but non-smooth Lyapunov functions we refer to \cite{Sontag1999,Clarke2001} \end{proof}

Lemma \ref{splittingLemma} allows us to reduce a complex validation of inequalities with $m+2n$ variables $ w_r^{\rm eq}$, $c_i^{\rm eq}$, and $c_i$ to a series of $m$ simpler inequalities with $2n$ variables. Moreover, non-negativity of $D_r  G(c,c^{\rm eq})$ means that the function $G(c,c^{\rm eq})$ does not increase with time along solutions of {\em one-dimensional} kinetic equations (\ref{1DKINUR}): in this system,  ${\D c}/{\D t}$ is proportional to vector $\gamma_r$ and for any positive solution $c(t)$ ($t>0$) the difference $c(t_1)-c(t_0)$ is always proportional to $\gamma_r$, $c(t_1)-c(t_0)=\xi \gamma_r$ with some scalar multiplier $\xi$ ($t_{0,1}>0$). This one-dimensional dynamics admits a strongly convex Lyapunov functions $H$ (\ref{LyapFreeEN}) with time derivative $ -(w_r^+-w_r^-)(\ln w_r^+- \ln w_r^-)$ (\ref{h-theorem}).

For each positive concentration vector $c$, the interval
\begin{equation}\label{intervalI}
I=(c+\mathbb{R}\gamma_r)\cap  \mathbb{R}_{>0}^n
\end{equation}
is positively invariant with respect to (\ref{1DKINUR}).  Restriction of $H$ (\ref{LyapFreeEN}) on this interval is a strongly convex function. The derivative of this function has the logarithmic singularity at the ends of the interval. $H$ has the unique minimizer  on $I$. It is the unique positive equilibrium point of  (\ref{1DKINUR}) on $I$ (and, by definition, a partial equilibrium of the complete system (\ref{reparKINUR})). These observations allow us to formulate the following criterion for the conditionally universal convex Lyapunov functions of the reaction kinetics (\ref{reparKINUR}) for reaction networks with detailed balance.

 Let $G(c,c^{\rm eq})$ be continuous function and  a convex function of $c\in  \mathbb{R}_{>0}^n$ for all $c^{\rm eq}$.
\begin{theorem}\label{GeneralHtheorem}
 $ G(c,c^{\rm eq})$ is a monotonically non-increasing function of time on the positive solutions of the kinetic equations   (\ref{reparKINUR}) for all non-negative values of equilibrium fluxes  if and only if for every positive concentration vector $c$ and every stoichiometric vector  $\gamma_r$ of the reaction mechanism the minimizer of $H(c,c^{\rm eq} )$ on the interval $I$ (\ref{intervalI}) is, at the same time, a minimizer of $ G(c,c^{\rm eq})$ on this interval:
\begin{equation}
\underset{{c+ \gamma_r  x \in \mathbb{R}_{>0}^n}}
{\operatorname{argmin}} H(c+ \gamma_r  x,c^{\rm eq} ) \subset \underset{{c+ \gamma_r  x \in\mathbb{R}_{>0}^n}} {\operatorname{argmin}} G(c+ \gamma_r  x,c^{\rm eq})
\end{equation}
\end{theorem}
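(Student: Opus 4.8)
The plan is to pass from the full network to a single reaction on a single invariant interval via Lemma \ref{splittingLemma}, and then to exploit the fact that along that interval the reaction flux changes sign exactly at the minimizer of $H$. First I would invoke Lemma \ref{splittingLemma}: the function $G(c,c^{\rm eq})$ is non-increasing on the solutions of (\ref{reparKINUR}) for all non-negative $w_r^{\rm eq}$ if and only if $D_r G(c,c^{\rm eq})\le 0$ for every reaction $r$ and every positive $c$, where $D_r G$ is the time derivative along the one-reaction system (\ref{1DKINUR}). It therefore suffices to fix $r$ and a positive $c$ and to work on the invariant interval $I$ of (\ref{intervalI}). I parameterize $I$ by $x$ through $c(x)=c+\gamma_r x$ and set $h(x)=H(c(x),c^{\rm eq})$ and $g(x)=G(c(x),c^{\rm eq})$; by hypothesis $g$ is convex, while $h$ is strongly convex with a unique minimizer $x^\ast$, which is the partial equilibrium of (\ref{1DKINUR}).

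The key step is a sign analysis on $I$. Writing $\Phi(x)=\prod_i(c_i(x)/c_i^{\rm eq})^{\alpha_{ri}}-\prod_i(c_i(x)/c_i^{\rm eq})^{\beta_{ri}}$ for the reaction flux, the one-reaction dynamics (\ref{1DKINUR}) reads $\D x/\D t=\Phi(x)$, so that $D_r G=g'(x)\,\Phi(x)$. Using $\nabla H$ from (\ref{Hgradient}), I would compute $h'(x)=(\nabla H,\gamma_r)=\sum_i\gamma_{ri}\ln(c_i(x)/c_i^{\rm eq})=\ln\prod_i(c_i(x)/c_i^{\rm eq})^{\beta_{ri}}-\ln\prod_i(c_i(x)/c_i^{\rm eq})^{\alpha_{ri}}$. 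Since $t\mapsto\ln t$ is increasing, $h'(x)$ and $\Phi(x)$ vanish simultaneously and carry opposite signs. Because $h$ is strongly convex with minimizer $x^\ast$, this yields $\Phi(x)>0$ for $x<x^\ast$, $\Phi(x)<0$ for $x>x^\ast$, and $\Phi(x^\ast)=0$. Hence the requirement $D_r G=g'(x)\,\Phi(x)\le 0$ is equivalent to $g'(x)\le 0$ for $x<x^\ast$ and $g'(x)\ge 0$ for $x>x^\ast$, which for a convex $g$ is precisely the statement that $x^\ast$ minimizes $g$ on $I$. Translating back, the minimizer of $H$ on $I$ is also a minimizer of $G$ on $I$, which is the claimed inclusion; running the equivalences in reverse gives the converse.

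I expect the main obstacle to be the non-smooth case, since $G$ is assumed only continuous and convex, so $g$ need not be differentiable. Here I would replace $g'$ by the one-sided derivatives $D^\pm g$ and interpret $D_r G$ as the corresponding interval of values, as already flagged in the proof of Lemma \ref{splittingLemma}; the inequalities become $D^+g(x)\le 0$ for $x<x^\ast$ and $D^-g(x)\ge 0$ for $x>x^\ast$, and their equivalence with ``$x^\ast$ minimizes $g$'' follows from monotonicity of one-sided derivatives of convex functions ($x<y\Rightarrow D^+g(x)\le D^-g(y)$) together with a limiting argument as $x\to x^\ast$. A secondary point to verify is that the logarithmic singularities of $h'$ at the endpoints of $I$ do no harm: the argument evaluates $\Phi$, $h'$, and the derivatives of $g$ only at interior points of $I$, so the boundary blow-up is irrelevant. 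Finally, since $h$ has a single minimizer, the containment $\operatorname{argmin} h\subset\operatorname{argmin} g$ reduces to the assertion $x^\ast\in\operatorname{argmin} g$, so no further care about set-valued minimizers of $g$ is needed.
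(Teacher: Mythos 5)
Your proposal is correct and follows essentially the same route as the paper: both reduce to a single reaction on the invariant interval $I$ via Lemma \ref{splittingLemma}, and both rest on the fact that the one-dimensional flux vanishes exactly at, and points toward, the minimizer of $H$ on $I$. The only difference is presentational --- you carry out a pointwise sign analysis of $g'(x)\,\Phi(x)$, whereas the paper phrases the same facts as a trajectory argument (the solution $c(t)$ converges to the partial equilibrium and a convex function decreases along a line toward its minimizer) --- and your handling of the non-smooth case via one-sided derivatives matches the remark already made in the proof of Lemma \ref{splittingLemma}.
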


\begin{proof}
According to Lemma \ref{splittingLemma}, it is necessary and sufficient to prove this theorem for the one-step reaction ($m=1$). Consider restriction of the one-step kinetic equation (\ref{1DKINUR}) on the interval $I$ (\ref{intervalI}). On this interval, the system has one  equilibrium (the partial equilibrium $c_{\gamma_r}^*$). It is stable, and the restriction of $H$ on this interval is the Lyapunov function of the system. The point $c_{\gamma_r}^*$ is the minimizer of $H$ on $I$.

Assume that  $c_{\gamma_r}^*$ is a minimizer of  $ G(c,c^{\rm eq})$ for $c\in I$. Then convexity of $G(c)$ implies that $G$ is monotonically non-increasing function of time on the solution $c(t)$ of (\ref{1DKINUR}) on $I$. ($G(c(t))$ decreases monotonically to the minimal value when the point $c(t)$ approaches its minimizer $c_{\gamma_r}^*$.)

Assume now that $G(c(t))$ does not increase in time due to dynamic of  (\ref{1DKINUR}) on $I$. This dynamics lead $c(t)$ to the unique equilibrium $c_{\gamma_r}^*$. This equilibrium should be a minimizer of $G$ on $I$. Indeed, if at some point $y\in I$ the function $G$ takes smaller value then in  $c_{\gamma_r}^*$, then in the motion from $y$ to $c_{\gamma_r}^*$ the value of $G$ should increase, which contradicts the assumption.
\end{proof}

Thus, to check that a convex function is a conditionally universal function for the reaction network with detailed balance, it is sufficient to check that its minimizers in the direction of the stoichiometric vectors of the reaction mechanism include the minimizers of $H$ (i.e. one-step partial equilibria).  Of course, this is a much simpler task than analysis of the signs of $\D G/\D t$ for all states and all values of parameters. Nevertheless, this simple check gives necessary and sufficient conditions for a function to be a conditionally universal Lyapunov function for the kinetic equations with a given reaction mechanism.

A cone of possible velocities is a convenient tool for analysis of conditionally universal Lyapunov functions for reaction networks. Consider all systems with detailed balance, a given reaction mechanism and a positive equilibrium $c^{\rm eq}$. According to the mass action law kinetic equations with detailed balance (\ref{reparKINUR}) the time derivatives $\D c / \D t$ at given point $c$ form a cone  $\mathbb{Q}_{\rm DB}(c)$:
\begin{equation}\label{DBcone}
\mathbb{Q}_{\rm DB}(c) = \left. cone\left\{\gamma_r  sign\left[\prod_{i=1}^n \left(\frac{c_i}{c_i^{\rm eq}}\right)^{\alpha_{ri}}-\prod_{i=1}^n \left(\frac{c_i}{c_i^{\rm eq}}\right)^{\beta_{ri}}\right]\; \right| r=1, \ldots , m\right\}, 
\end{equation}
where $cone$ stands for the conic hull and $sign(x)=1$ if $x>0$, $-1$ if $x<0$ and 0 if $x=0$.

\begin{remark}[Differential criterion of conditionally universal Lyapunov functions]
A smooth function $G(c,c^{\rm eq})$ is a conditionally universal Lyapunov functions for a given reaction mechanism and positive equilibrium $c^{\rm eq}$ if
\begin{equation}\label{SmoothUniversal}
(\nabla_c G(c,c^{\rm eq}), x)\leq 0 \mbox{  for all  non-negative  } c \mbox{  and all  } x \in \mathbb{Q}_{\rm DB}(c).
\end{equation}
For continuous  convex functions we have to use in (\ref{SmoothUniversal}) the {\em subgradients} of $G(c,c^{\rm eq})$ instead of gradients (see also the proof of Lemma \ref{splittingLemma}).
\end{remark}

The systems of complex balance are defined by linear relationships  (\ref{complexbalanceGENKIN}) between non-negative values $w_r^{\rm eq}$. Therefore, the time derivatives $\D c / \D t$ at given point $c$  for all systems with complex balanced equilibrium $c^{\rm eq}$ also form a cone, $\mathbb{Q}_{\rm CB}(c)$.

It is obvious that  $\mathbb{Q}_{\rm DB} \subseteq  \mathbb{Q}_{\rm CB}$. Surprisingly, these cones coincide \cite[Theorem 2  (Local equivalence of detailed and complex balance)]{Gorban2014}. This means that for every mass action law system with the complex balanced positive equilibrium and any concentration vector $c$ there exists a mass action law system with detailed balance and the same positive equilibrium such that the velocity vectors $\D c/ \D t$ at point $c$ for these systems coincide. If some reactions of the complex balance system are irreversible, a reverse reaction should be added. Its rate constant will be zero for the original complex balance system and non-zero for the detailed balance system. This theorem was proven in \cite{Gorban2014} even for more general kinetic law, the generalized mass action law. A bit earlier, such theorem was proven for Markov processes \cite{Gorban2013Markov}.  Continuous-time Markov kinetics with a finite number of states and a given positive equilibrium was studied. This class of systems is significantly wider than the systems with detailed balance. Nevertheless, for an arbitrary  probability distribution $P$ and a general system there exists a system with detailed balance and the same equilibrium that has the same velocity $\D P/\D t$ at point $P$. The results are extended to nonlinear systems with the generalized mass action law.

\begin{remark}[Coincidence of conditionally universal Lyapunov functions for detailed and complex balance systems] According to \cite[Theorem 2]{Gorban2014},  a  function $G(c,c^{\rm eq}$ is a Lyapunov function for all systems (\ref{kinurChem}) with given reaction mechanism and complex balanced equilibrium $c^{\rm eq}$ if and only if it is the universal Lapunov functions for all systems (\ref{reparKINUR}) with detailed balance, the same equilibrium and the same mechanism (supplemented by the reverse reactions, if necessary).
\end{remark}

For some reaction mechanisms there exist Lyapunov functions without any relation to detailed balance or complex balance. For example, assume that all the elementary reactions have the form
\begin{equation}\label{noInter}
\alpha_{ri}A_i\to \sum_{j=1}^n \beta_{rj} A_j
\end{equation}
(only one $\alpha_{ri}$ can be non-zero; direct and reverse reactions are considered separately and some reactions can be irreversible). If there exists a positive balance $\sum_i m_i c_i=M=const$ then for any two solutions of the kinetic equations $c^1(t)$, $c^2(t)$ with the same value of $M$ the weighted $l_1$ distance between them $\sum_i m_i |c^1_i(t)-c^2_i(t)|$ monotonically decreases \cite{GorbanBykYab1986}.

Convergent dynamics with quadratic Lyapunov norms $\|x\|^2=(x,Px)$, where $P$ is a symmetric positive definite matrix, was studied by Demidovich in 1960s and widely used \cite{Henk2004}. Systems (\ref{noInter}) give us example of a class of non-linear convergent systems in weighted $l_1$ norm. Some other examples of reaction mechanisms with such convergence property were produced in \cite{Angeli2007,Banaji2013} on the basis of monotonicity idea. All these selected mechanisms are rather simple. They have convergence property for any values of reaction rate constants. The monotonicity idea is useful for selection of reaction kinetic equations with stable dynamics without detailed or complex  balance properties \cite{Angeli2007,Sontag2007}.

On the contrary, in this paper we consider reaction networks with an arbitrary (presumably, nonlinear) reaction mechanism but with specific restrictions on the reaction rate constants. They should obey the principle of detailed balance. (Later on we explain why the same results are valid for systems with the so-called complex balance.) For such systems, there exist thermodynamic Lyapunov functions. For perfect systems under isothermal isochoric conditions the explicit form of this function is presented by (\ref{LyapFreeEN}). The situation with linear kinetics was similar when R\'enyi revealed $f$-divergences for Markov chains. The decrease of relative entropy (information) in time was well-known but there were no other Lyapunov functions until R\'enyi work \cite{Renyi1961}. Below we construct a wide family of additional Lyapunov functions for any nonlinear reaction network, obeying the  mass action law and the principle of detailed balance.

\section{Partial equilibria and new Lyapunov functions for mass action law \label{secNewMAL}}

Let vector   $\gamma$ have   both positive and negative components.
For every vector of concentrations $c$ we define the corresponding {\em partial equilibrium} in direction $\gamma$ as
\begin{equation}\label{partEQ1}
c_{\gamma}^*(c)=\underset{{c+\gamma x \in\mathbb{R}_{>0}^n}}
{\operatorname{argmin}} H(c+ \gamma  x).
\end{equation}
This partial equilibrium $c_{\gamma}^*(c)$ is the minimizer of $H$ on the interval
$$(c+\mathbb{R}\gamma)\cap  \mathbb{R}_{>0}^n.$$
This interval is bounded.  For a positive point $c$ the minimizer  $c_{\gamma}^*(c)$ is also positive. This is an elementary consequence of the logarithmic singularity of $(c\ln c)'$ at zero. Here and below, argmin is the set of points where the function gets its minimum. The functions $H(c)$ is strongly convex on each bounded set because its Hessian has the form
$$\frac{\partial^2 H(c)}{\partial c_i \partial c_j} = \frac{1}{c_i} \delta_{ij},$$
where $ \delta_{ij}$ is the Kronecker delta.
Therefore, each the argmin set in (\ref{partEQ1}) consists of one point.

For monomolecular and for bimolecular reactions there are simple analytic expression for partial equilibria. Consider a monomolecuar reaction $A_i \rightleftharpoons A_j$. The non-zero components of the stoichiometric vector $\gamma$ are: $\gamma_i=-1$, $\gamma_j=1$. For a given vector $c$, the partial equilibrium is given by the equation $k^+c_i^*(c)=k^-c_j^*(c)$ under condition that $c^*(c)=c+x \gamma$. Simple algebra gives:
\begin{equation}\label{linearQE}
\begin{split}
&c_i^*(c)=\frac{k^-}{k^++k^-}\left(c_i+c_j\right);\\
&c_j^*(c)=\frac{k^+}{k^++k^-}\left(c_i+c_j\right).
\end{split}
\end{equation}
Other components of  $c^*(c)$ coincide with those of $c$.
The sum $c_i+c_j=b$ does not change in the reaction $A_i \rightleftharpoons A_j$. Rewrite (\ref{linearQE}) using this `partial balance' $b$:
\begin{equation}\label{linearQEsh}
c_i^*(c)=\frac{k^-b}{k^++k^-}; \;\;c_j^*(c)=\frac{k^+b}{k^++k^-}.
\end{equation}
The `rate constants' in (\ref{linearQE}) and (\ref{linearQEsh})
can be defined through a positive equilibrium point $c^{\rm eq}$: for the linear reaction, $A_i \rightleftharpoons A_j$, $k^+ c_i^{\rm eq}=k^-c_j^{\rm eq}$ and we can take, for example, 
$$k^+=\frac{c_j^{\rm eq}}{c_i^{\rm eq}+c_j^{\rm eq}}; \; k^-=\frac{c_i^{\rm eq}}{c_i^{\rm eq}+c_j^{\rm eq}}$$
(we use the normalization condition $k^++k^-=1$ to select one solution from the continuum of proportional sets of constants). The expression for the partial equilibrium  for the linear reaction (\ref{linearQEsh}) is
\begin{equation}\label{linearQEEQ}
c_i^*(c)=\frac{c_i^{\rm eq}b}{c_i^{\rm eq}+c_j^{\rm eq}}; \;\;c_j^*(c)=\frac{c_j^{\rm eq}b}{c_i^{\rm eq}+c_j^{\rm eq}}.
\end{equation}

For a bimolecular reaction $A_i+A_j \rightleftharpoons A_k$ the non-zero components of the stoichiometric vector $\gamma$ are: $\gamma_i=\gamma_j=-1$, $\gamma_k=1$. Two independent `partial balances'  that do not change in the reaction are:
$$b_1=c_i+c_j+2c_k,\;\; b_2= c_i-c_j.$$

The partial equilibrium $c^*(c)$ is the positive solution of the equation $k^+c_i^*(c)c_j^*(c)=k^-c_k^*(c)$ under condition that $c^*(c)=c+x \gamma$. After solving of quadratic equation for $x$ we get:
\begin{equation}\label{NlinearQE}
\begin{split}
&c_i^*(c)=\frac{b_2}{2}-\frac{k^-}{2k^+}+\sqrt{\frac{b_2^2}{4}+\frac{k^-b_1}{2k^+}+\left(\frac{k^-}{2k^+}\right)^2};\\
&c_j^*(c)=-\frac{b_2}{2}-\frac{k^-}{2k^+}+\sqrt{\frac{b_2^2}{4}+\frac{k^-b_1}{2k^+}+\left(\frac{k^-}{2k^+}\right)^2};\\
&c_k^*(c)=\frac{b_1}{2}+\frac{k^-}{2k^+}-\sqrt{\frac{b_2^2}{4}+\frac{k^-b_1}{2k^+}+\left(\frac{k^-}{2k^+}\right)^2}.
\end{split}
\end{equation}
The signs in front of square root are selected to provide positivity of $c^*(c)$. 

We can rewrite (\ref{NlinearQE}) using any positive equilibrium $c^{\rm eq}$. Indeed, at this point, the detailed balance gives $k^+ c^{\rm eq}_ic^{\rm eq}_j=k^-c^{\rm eq}_k$ and $$\frac{k^-}{k^+}=\frac{c^{\rm eq}_ic^{\rm eq}_j}{c^{\rm eq}_k}.$$
Therefore,
\begin{equation}\label{NlinearQE2}
\begin{split}
&c_i^*(c)=\frac{b_2}{2}-\frac{c^{\rm eq}_ic^{\rm eq}_j}{2c^{\rm eq}_k}+\sqrt{\frac{b_2^2}{4}+b_1\frac{c^{\rm eq}_ic^{\rm eq}_j}{2c^{\rm eq}_k}+\left(\frac{c^{\rm eq}_ic^{\rm eq}_j}{2c^{\rm eq}_k}\right)^2};\\
&c_j^*(c)=-\frac{b_2}{2}-\frac{c^{\rm eq}_ic^{\rm eq}_j}{2c^{\rm eq}_k}+\sqrt{\frac{b_2^2}{4}+b_1\frac{c^{\rm eq}_ic^{\rm eq}_j}{2c^{\rm eq}_k}+\left(\frac{c^{\rm eq}_ic^{\rm eq}_j}{2c^{\rm eq}_k}\right)^2};\\
&c_k^*(c)=\frac{b_1}{2}+\frac{c^{\rm eq}_ic^{\rm eq}_j}{2c^{\rm eq}_k}-\sqrt{\frac{b_2^2}{4}+b_1\frac{c^{\rm eq}_ic^{\rm eq}_j}{2c^{\rm eq}_k}+\left(\frac{c^{\rm eq}_ic^{\rm eq}_j}{2c^{\rm eq}_k}\right)^2}.
\end{split}
\end{equation}
Similar formulas can be easily obtained for the bimolecular reactions $A_i+A_j \rightleftharpoons A_k+A_l$ and $A_i+A_j \rightleftharpoons 2A_k$. They require  nothing more than the quadratic formula and the detailed balance condition.

Partial equilibria appear in non-equilibrium thermodynamics from the very beginning. Already Jaynes considered conditional maximization of entropy as a  basic method of equilibrium and non-equilibrium statistical physics \cite{Jaynes}. According to Grmela, the equilibrium and nonequilibrium thermodynamics as well as the
equilibrium and nonequilibrium statistical mechanics should be considered as 
particular representations of the Dynamical Maximum Entropy Principle \cite{Grmela2013}. Conditional maximization of entropy gives thermodynamic basis to thermodynamic of driven systems \cite{Grmela2016}.  Partial equilibria (or quasiequilibria - conditional maximizers of entropy for more general conditions)  are often considered as constrained equilibria  as a result of introducing external or internal constraints. For example, the models of extended irreversible thermodynamics can be produced by the conditional maximization of entropy subject to various dynamical constrains and hypotheses about slow and fast variables \cite{JouLebon1996}. Methods of invariant manifolds produces dynamical correction to these conditional maximum entropy models \cite{GorbanKarlin2005}.

In this work, we use a very simple case of partial equilibria: these are equilibria with constrains under which only one reversible elementary reaction is on, while other reactions are frozen. To be more precise, we use the term `$H$-function' instead of `entropy'.
 
The {\em partial equilibrium $H$-function} is associated with the partial equilibria:
\begin{equation}\label{partEQEnt1}
H_{\gamma}^*(c)= H(c_{\gamma}^*(c))=\min_{c+\gamma x  \in\mathbb{R}_{>0}^n} H(c+ \gamma  x).
\end{equation}

We do not assume here any time separation and constrained dynamics and use partial equilibria in a completely different way. Dynamics of $c(t)$ remains undeformed and foll;owes the original kinetic equations. For each stoichiometric vector $\gamma$  we consider the partial equilibrium $c^*_{\gamma}(c(t))$ as a projection of the genuine dynamics on the hypersurface of partial equilibria. This set of projections  is a `shadow' of kinetic curves. The new Lyapunov function for the genuine kinetics is combined from the entropies of these projections. For every compact set $\Gamma$  of vectors $\gamma$ with   both positive and negative components, we define
\begin{equation}\label{GreatpartEQEnt}
H_{\Gamma}^*(c)=\max_{\gamma \in \Gamma } H_{\gamma}^*(c).
\end{equation}
This definition has a simple explanation: for a given initial positive concentration vector $c$ and each stoichiometric vector $\gamma \in \Gamma $ we find the partial equilibrium of the one-step system ($m=1$) with this stoichiometric vector. This partial equilibrium, $c_{\gamma}^*(c)$,  is the projection of the initial vector $c$ parallel to the vector $\gamma$ onto the hypersurface of partial equilibria defined by the equation
$$(\nabla H, \gamma)=0.$$ 

From all these projections ($\gamma \in \Gamma$) we select the most non-equilibrium state, i.e., the state with the maximal value of $H=H(c_{\gamma}^*(c))$ (\ref{partEQEnt1}). This maximal value is the new function $H_{\Gamma}^*(c)$ (\ref{GreatpartEQEnt}). It is defined by the set $\Gamma$ that should include all the stoichiometric vectors of the reaction mechanism. 

It is necessary to stress that for every reaction mechanism there exists continuum of compact sets $\Gamma$, which include  the stoichiometric vectors of this mechanism. The partial equilibrium is the same for the vector $\gamma$ and $x \gamma$ for any $x\neq 0$. Therefore, rigorously speaking, the functions $H_{\Gamma}^*(c)$ should be indexed by subsets of the projective space (the space of one-dimensional subspaces), and not by sets of vectors.

The hypersurface of partial equilibria for a given vector $\gamma$ is the hyperplane orthogonal to $\gamma$ in the entropic inner product. The level sets of the partially equilibrium function $H_{\gamma}^*(c)$ are cylinders with the spheric base and axis parallel to $\gamma$. 

Fig. \ref{PatialEquilib} represents a very simplified example of the partial equilibrium $c_{\gamma}^*(c)$ and the Lyapunov function $H_{\Gamma}^*(c)$ construction. The system with three components of the same molecular weight is presented in the triangle $c_1+c_2+ c_3=const$ drawn in barycentric coordinates. The reaction mechanism consists of three reactions $A_1 \rightleftharpoons A_2$,$A_2 \rightleftharpoons A_3$, and $2A_1 \rightleftharpoons A_2+A_3$. The equilibrium $c^{\rm eq}$ is assumed in the center of the triangle ($c^{\rm eq}_1=c^{\rm eq}_2=c^{\rm eq}_3$). The partial equilibria of the first two reactions form the straight lines in the triangle, the medians, while the partial equilibria of the non-linear reaction form a parable $c_2c_3/c_1^2=1$. These three lines intersect in the equilibrium due to detailed balance. 

Explicit expressions for the partial equilibria $c^*_{\gamma}(c)$ are:
\begin{itemize}
\item For the reaction $A_1 \rightleftharpoons A_2$ 
\begin{equation}\label{c1gamma}
c^*_{1,\gamma_1}(c)=c^*_{2,\gamma_1}(c)=\frac{1}{2}(c_1+c_2), \; c^*_{3, \gamma_1}(c)=c_3;
\end{equation}
\item For the reaction $A_2 \rightleftharpoons A_3$
\begin{equation}\label{c2gamma}
c^*_{1, \gamma_2}(c)=c_1,\;c^*_{2, \gamma_2}(c)=c^*_{3, \gamma_2}(c)=\frac{1}{2}(c_2+c_3);
\end{equation}
\item For the reaction $2A_1 \rightleftharpoons A_2+A_3$
\begin{equation}\label{c3gamma}
\begin{split}
&c_{1, \gamma_3}^*(c)= \frac{1}{3}\left(-b_1+\sqrt{4b_1^2-3b_2^2}\right); \\
&c_{2, \gamma_3}^*(c)=\frac{1}{6}\left(4 b_1+3b_2-\sqrt{4b_1^2-3b_2^2}\right);\\
&c_{3, \gamma_3}^*(c)=\frac{1}{6}\left(4 b_1-3b_2-\sqrt{4b_1^2-3b_2^2}\right),
\end{split}
\end{equation}
where $b_1=c_1+c_2+c_3$ and $b_2=c_2-c_3$ are the independent `partial balances' for this reaction.
\end{itemize}

Fig. \ref{PatialEquilib}a  shows the partial equilibria for an arbitrarily selected point $c$. In Fig. \ref{PatialEquilib}b, one level set of $H_{\Gamma}^*$ is presented, where $\Gamma=\{\gamma_1, \gamma_2, \gamma_3\}$:
$$H_{\Gamma}^*(c)=\max\{H(c^*_{\gamma_1}(c)),H(c^*_{\gamma_2}(c)),H(c^*_{\gamma_3}(c))\},$$
$c^*_{\gamma_i}$ $i=1,2,3$ are given by (\ref{c1gamma}), (\ref{c2gamma}), (\ref{c3gamma}) and  $H$-function is the classical one (\ref{LyapFreeEN}).

 The sublevel set of $ H_{\Gamma}^*(c)$ is the  intersection of strips with sides parallel to the stoichiometric vectors $\gamma_i$. These strips are sublevel sets for the partial equilibrium entropies $H_{\gamma_i}^*(c)$. In higher dimensions, the level sets of partial equilibrium $H$-function $H_{\gamma}^*(c)$ are cylindrical hypersurfaces with the  generatrix parallel to $\gamma$. The base (or the directrix) of this cylindric surface is the level set of $H$ on the surface of partial equilibrium.  In  Fig. \ref{PatialEquilib}b, the `surfaces' of partial equilibria are lines, the level sets of $H$ on these lines are couples of points. These points are highlighted. Note that the sublevel areas for the new function $H^*_{\Gamma}$ in Fig. \ref{PatialEquilib} are convex polygons, whereas for the classical $H$-function they have smooth border.

\begin{figure}[t]
\centering{
\includegraphics[width=0.45\textwidth]{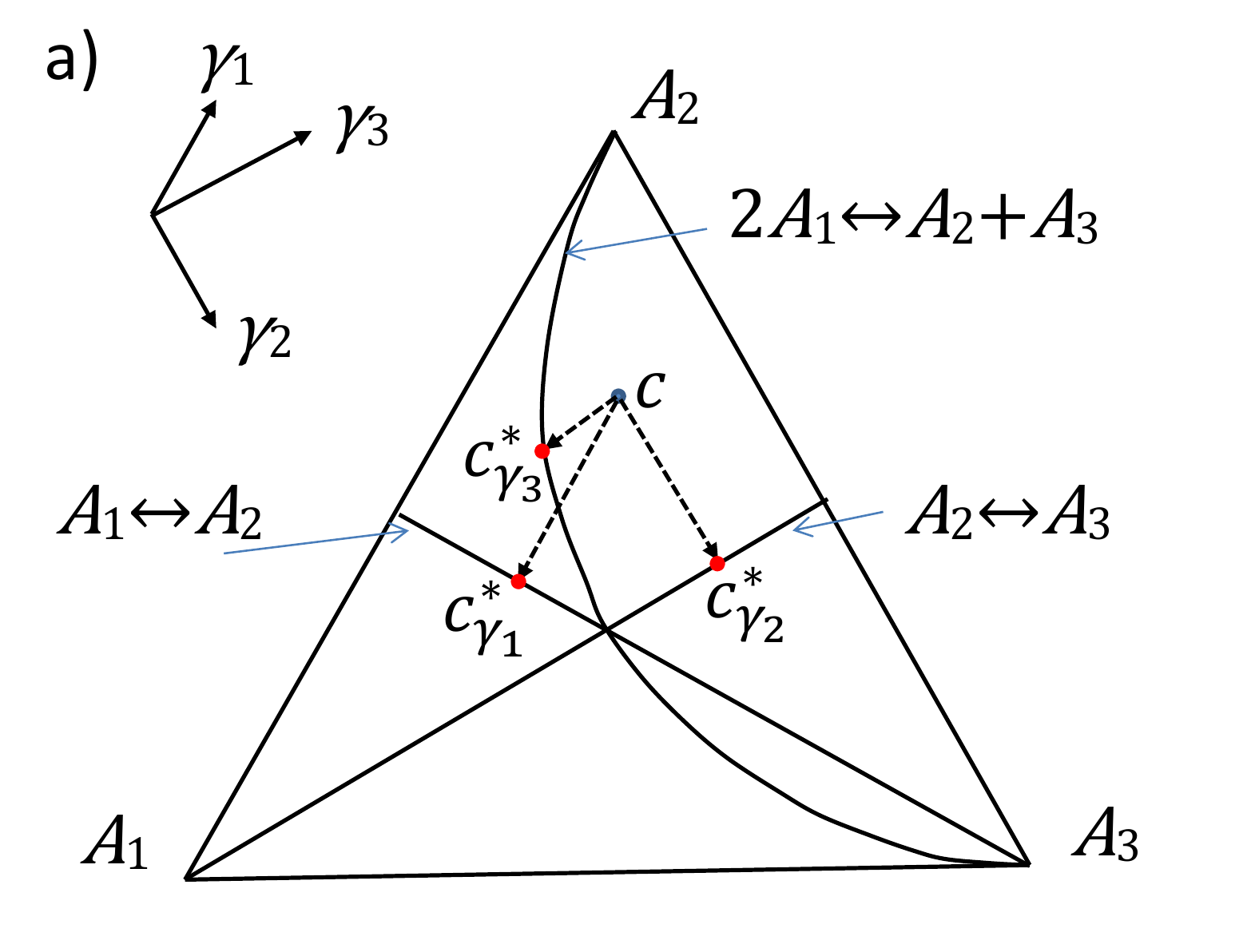} \;\;
\includegraphics[width=0.45\textwidth]{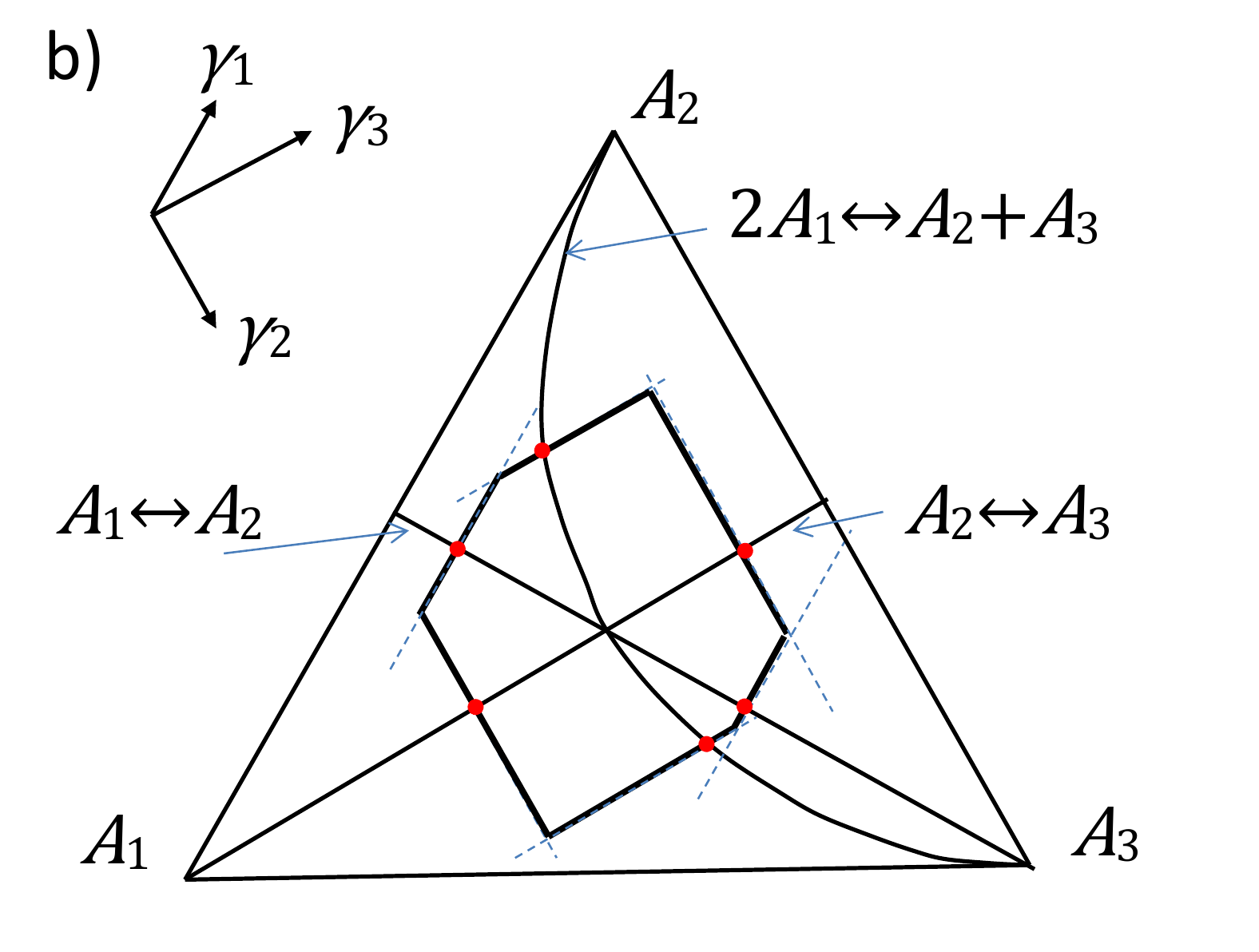}
\caption{The stoichiometric vectors $\gamma_1, \gamma_2, \gamma_3$ and the  partial equilibria for the reaction mechanism $A_1 \rightleftharpoons A_2$, $A_2 \rightleftharpoons A_3$, $2A_1 \rightleftharpoons A_2+A_3$. The concentration triangle $c_1+c_2+c_3=b$
is split by the lines of partial equilibria into six compartments. In each compartment, the dominated direction of each reaction (towards the partial equilibrium) is defined unambiguously. a) Partial equilibria (highlighted points) for an arbitrary positive concentration vector $c$.
b) A level set of $H_{\Gamma}^*$ for $\Gamma=\{\gamma_1, \gamma_2, \gamma_3\}$.  \label{PatialEquilib}}}
\end{figure}

Assume that the principle of detailed balance holds and the vector of equilibrium concentrations $c_i^{\rm eq}$ in the definition of  $H$-function (\ref{LyapFreeEN}) is the point of detailed balance.

\begin{theorem}\label{HTheoremNew}
Let all the stoichiometric vectors of the reaction mechanism (\ref{ReactMech}) belong to $\Gamma$ then $H_{\Gamma}^*(c)$ (\ref{partEQEnt1}) is monotonically non-increasing function along the solutions of the kinetic equations (\ref{reparKINUR})
 for all positive $c$ and all non-negative values of equilibrium fluxes $w^{\rm eq}$.
\end{theorem}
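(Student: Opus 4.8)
The plan is to apply Theorem~\ref{GeneralHtheorem} to the convex function $G(c,c^{\rm eq})=H_\Gamma^*(c)$. By that theorem, together with the reduction to single reactions furnished by Lemma~\ref{splittingLemma}, it suffices to check two things: that $H_\Gamma^*$ is a continuous convex function of $c\in\mathbb{R}_{>0}^n$, and that for every stoichiometric vector $\gamma_r$ of the mechanism and every line $I=(c+\mathbb{R}\gamma_r)\cap\mathbb{R}_{>0}^n$ the minimizer of $H$ on $I$ is also a minimizer of $H_\Gamma^*$ on $I$. I would organize the argument around these two claims, convexity and minimizer-inclusion.

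For convexity, I would first note that $(c,x)\mapsto H(c+\gamma x)$ is jointly convex, since $H$ is convex and $c+\gamma x$ is affine in $(c,x)$, and its domain $\{(c,x):c+\gamma x\in\mathbb{R}_{>0}^n\}$ is convex. Partial minimization of a jointly convex function over one variable yields a convex function of the remaining ones, so each $H_\gamma^*(c)=\min_x H(c+\gamma x)$ is convex in $c$; the logarithmic singularity of $H$ at the boundary guarantees that the minimum is attained at an interior positive point, so no $-\infty$ values or open-domain pathologies arise. Since the pointwise supremum of convex functions is convex, $H_\Gamma^*=\sup_{\gamma\in\Gamma}H_\gamma^*$ is convex, and being finite-valued and convex on the open orthant it is automatically continuous there; compactness of $\Gamma$ and continuity of $\gamma\mapsto H_\gamma^*(c)$ ensure the supremum is in fact a maximum, matching the definition (\ref{GreatpartEQEnt}).

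The decisive observation for the minimizer-inclusion is that $H_{\gamma_r}^*$ is invariant under translations along $\gamma_r$. Indeed, for any $y\in I$ the admissible interval $(y+\mathbb{R}\gamma_r)\cap\mathbb{R}_{>0}^n$ equals $I$, so $H_{\gamma_r}^*(y)=\min_{z\in I}H(z)=H(c^*)$ is constant on $I$, where $c^*=c_{\gamma_r}^*(c)$ is the partial equilibrium. At $c^*$ every other direction gives $H_\gamma^*(c^*)=\min_x H(c^*+\gamma x)\le H(c^*)$ by taking $x=0$, so the maximum defining $H_\Gamma^*(c^*)$ is attained by $\gamma_r$ and $H_\Gamma^*(c^*)=H(c^*)$. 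For any other $y\in I$, since $\gamma_r\in\Gamma$, one has $H_\Gamma^*(y)\ge H_{\gamma_r}^*(y)=H(c^*)=H_\Gamma^*(c^*)$. Hence $c^*$ minimizes $H_\Gamma^*$ on $I$, and since strong convexity makes $c^*$ the unique minimizer of $H$ on $I$, the inclusion of argmin-sets required by Theorem~\ref{GeneralHtheorem} holds.

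Combining the two claims and invoking Theorem~\ref{GeneralHtheorem} then gives $\D H_\Gamma^*/\D t\le 0$ along (\ref{reparKINUR}) for all non-negative fluxes; the non-smoothness of the maximum is absorbed by the subgradient form of the criterion noted in the proof of Lemma~\ref{splittingLemma}. I expect the only genuinely delicate point to be the convexity step, namely making rigorous that partial minimization preserves convexity on the positive orthant, whereas the minimizer-inclusion reduces, once the translation invariance of $H_{\gamma_r}^*$ is observed, to the one-line domination bound $H_\gamma^*\le H$.
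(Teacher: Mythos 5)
Your proposal is correct and follows essentially the same route as the paper's own proof: reduce via Theorem~\ref{GeneralHtheorem} to (i) convexity of $H_{\Gamma}^*$, obtained from partial minimization of the jointly convex $(c,x)\mapsto H(c+\gamma x)$ followed by taking a pointwise maximum, and (ii) the minimizer inclusion, obtained from the constancy of $H_{\gamma_r}^*$ on $I$ together with the bound $H_{\gamma'}^*(c^*)\le H(c^*)$ at the partial equilibrium. Your $x=0$ domination argument is a slightly cleaner phrasing of the paper's inequality (\ref{localMaxQE}), but the substance is identical.
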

\begin{proof}
To use Theorem \ref{GeneralHtheorem} we have to prove two statements:
\begin{enumerate}
\item The function $H_{\Gamma}^*(c)$ is convex in  $\mathbb{R}_{>0}^n$;
\item For each $\gamma \in \Gamma$ and a positive vector $c$ the minimizer of $H$ on the interval $(c+\gamma \mathbb{R})\cap \mathbb{R}_{>0}^n$ is, at the same time, a minimizer of $H_{\Gamma}^*$ on this interval.
\end{enumerate}

We prove convexity of the function $H_{\Gamma}^*(c)$  in  $\mathbb{R}_{>0}^n$ in two steps.
\begin{itemize}
\item Convexity of $H_{\gamma}^*(c)$ for one-element sets $\Gamma= \{\gamma\}$.
\item Convexity of $H_{\Gamma}^*(c)=\max_{\gamma \in \Gamma } H_{\gamma}^*(c)$.
\end{itemize}

Let us prove convexity of $H_{\Gamma}^*(c)$ for one-element sets $\Gamma$, that is, we will prove convexity of the partial equilibrium $H$-function $H_{\gamma}^*(c)$  (\ref{partEQEnt1}). By definition, convexity of  $H_{\gamma}^*(c)$ in  $\mathbb{R}_{>0}^n$ means that for each two positive concentration vectors $c^1$ and $c^2$ and a number  $\lambda \in [0,1]$ the inequality holds:
 $$H_{\gamma}^*(\lambda c^1+ (1-\lambda) c^2) \leq \lambda H_{\gamma}^*(c^1)+(1-\lambda)H_{\gamma}^*(c^2).$$
First, notice that due to the convexity of $H$
 $$H(\lambda c_{\gamma}^*(c^1)+(1-\lambda) c_{\gamma}^*(c^2)) \leq
    \lambda H( c_{\gamma}^*(c^1))+(1-\lambda)H(c_{\gamma}^*(c^2)).$$
        Secondly, $H(c_{\gamma}^*(c^{1,2}))=H_{\gamma}^*(c^{1,2})$ by the definition of the partial equilibrium $H$-function (\ref{partEQEnt1}). Therefore, the previous inequality can be rewritten as
$$H(\lambda c_{\gamma}^*(N^1)+(1-\lambda) c_{\gamma}^*(N^2) \leq \lambda
    H_{\gamma}^*(c^1)+(1-\lambda)H_{\gamma}^*(c^2).$$
Finally, $$c_{\gamma}^*(c^{1,2}) \in (c^{1,2}+\gamma \mathbb{R})\cap \mathbb{R}_{>0}^n,$$
    hence,
$$\lambda c_{\gamma}^* (c^1)+(1-\lambda) c_{\gamma}^* (c^2)\in \lambda c^1+(1-\lambda) c^2 + \gamma \mathbb{R}$$ and
$$H(c_{\gamma}^*(c^1)+(1-\lambda) c_{\gamma}^* (c^2))\geq H_{\gamma}^*(\lambda     c^1+ (1-\lambda)c^2)$$
 because the last value is the minimum of $H$ on the interval
 $$(\lambda c^1+(1-\lambda) c^2 +\gamma \mathbb{R})\cap \mathbb{R}_{>0}^n.$$ Convexity of  the partial equilibrium $H$-function $H_{\gamma}^*(c)$  is proven.

Convexity of $H_{\Gamma}^*(c)$ follows from the convexity of the partial equilibrium $H$-function $H_{\gamma}^*(c)$, from the definition of $H_{\Gamma}^*(c)$ as the maximum of $H_{\gamma}^*(c)$ ($\gamma \in \Gamma$), and from the following fact from convex analysis: Maximum of a set of convex functions is again convex. The shortest proof is based on the definition of a convex function as a function with convex epigraph \cite{Rockafellar} and follows  from the observation that the epigraph of the maximum of a family of   functions is the intersection of their epigraphs.

Let us analyse the minimizers of $H_{\Gamma}^*$ on the interval  $I=(c+\gamma \mathbb{R})\cap \mathbb{R}_{>0}^n$ for a positive concentration vector $c$. Select $\gamma \in \Gamma$.  The minimizer of $H$ on the interval $I$ is
the partial equilibrium $c_{\gamma}^*(c)$, by the definition (\ref{partEQ1}). Function $H_{\gamma}^*(c)$ is constant on the interval $I=(c+\gamma \mathbb{R})\cap \mathbb{R}_{>0}^n$ and  $H_{\gamma}^*(c)=H(c_{\gamma}^*(c))$ on $I$. Therefore, $c_{\gamma}^*(c)$ is a minimizer of $H_{\gamma}^*(c)$ on $I$ (trivially, as all other points of $I$ do).

Notice, that for all $\gamma' \in \Gamma$
\begin{equation}\label{localMaxQE}
H_{\gamma}^*(c_{\gamma}^*(c))\geq H_{\gamma'}^*(c_{\gamma}^*(c))
\end{equation}
and the equality is strong if  $c_{\gamma}^*(c)\neq c_{\gamma'}^*(c_{\gamma}^*(c))$. Indeed, $c_{\gamma'}^*(c_{\gamma}^*(c))$ is the unique  minimizer of $H$ on the interval $(c_{\gamma}^*(c)+\gamma' \mathbb{R})\cap \mathbb{R}_{>0}^n$. If this minimizer does not coincide with $c_{\gamma}^*(c)$ then
$$H_{\gamma'}^*(c_{\gamma}^*(c))=H(c_{\gamma'}^*(c_{\gamma}^*(c)))<H(c_{\gamma}^*(c))=H_{\gamma}^*(c_{\gamma}^*(c)).$$
According to inequality (\ref{localMaxQE}),
\begin{equation}\label{MaxQEineq}
\max_{\gamma' \in \Gamma}H_{\gamma'}^*(c_{\gamma}^*(c))=H_{\gamma}^*(c_{\gamma}^*(c)).
\end{equation}

For any family of convex functions $\Phi$ on the interval $I$ the following statement holds. Let $f\in \Phi$.
If $y$ is a minimizer of $f(x)$ on $I$, $F(x)=\max\{\phi(x) | \phi \in \Phi\}$, and $F(y)=f(y)$ then $y$ is a minimizer of $F(x)$ on $I$. Indeed, for any $z\in I$ $f(z)\geq f(y)$ because $y$ is a minimizer of $f$. At the same time, $F(z)\geq f(z)$ by the definition of $F$. Hence, for any $z\in I$ $F(z)\geq F(y)$ and $y$ is a minimizer of $F(x)$ on $I$.

Let us take $f=H_{\gamma}^*(c)$ and $\Phi=\{H_{\gamma'}^*(c) | \gamma'\in\Gamma\}$. Then $F=H_{\Gamma}^*(c)$. Select $y=c_{\gamma}^*(c)$. Notice that $y$ is a minimizer of $f$ and $F(y)=f(y)$ according to (\ref{MaxQEineq}). Therefore, $c_{\gamma}^*(c)$ is a minimizer of $H_{\Gamma}^*(c)$ on $I$. By combining this result with the proven convexity of $H_{\Gamma}^*(c)$ and applying Theorem \ref{GeneralHtheorem}, we prove that $H_{\Gamma}^*(c)$ is a monotonically non-increasing function on solutions of kinetic equations.
\end{proof}

\begin{corollary}Assume that the complex balance condition holds and the vector of equilibrium concentrations $c_i^{\rm eq}$ in the definition of  $H$-function (\ref{LyapFreeEN}) is the point of complex balance.

Let all the stoichiometric vectors of the reaction mechanism (\ref{ReactMech}) belong to $\Gamma$  then the time derivative $H_{\Gamma}^*(c)$ (\ref{GreatpartEQEnt}) 
 by virtue  (\ref{reparKINUR}) is non-positive  for all positive $c$ and all non-negative values of equilibrium fluxes $w^{\rm eq}$: $\D H_{\Gamma}^*(c)/\D t \leq 0$.
\end{corollary}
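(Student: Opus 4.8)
The plan is to deduce this complex-balance statement from the detailed-balance Theorem \ref{HTheoremNew} by transporting the decay property across the local equivalence of velocity cones $\mathbb{Q}_{\rm DB}(c)=\mathbb{Q}_{\rm CB}(c)$ recorded in the Remark on coincidence of conditionally universal Lyapunov functions. The point is that all of the genuinely analytic work is already finished: Theorem \ref{HTheoremNew} establishes that $H_{\Gamma}^*$ is convex on $\mathbb{R}_{>0}^n$ and is a conditionally universal Lyapunov function for every detailed-balance system sharing the mechanism and the equilibrium $c^{\rm eq}$. What remains is purely a transfer argument at the level of velocity vectors at each fixed point $c$.

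First I would recast Theorem \ref{HTheoremNew} in cone language. Since $H_{\Gamma}^*$ is convex but generally non-smooth (its level sets are polytopes, cf. Fig. \ref{PatialEquilib}), the relevant object is the one-sided directional derivative $(H_{\Gamma}^*)'(c;v)=\max_{p\in\partial H_{\Gamma}^*(c)}(p,v)$, and the content of Theorem \ref{HTheoremNew} is exactly that $\D H_{\Gamma}^*/\D t=(H_{\Gamma}^*)'(c;v)\le 0$ for every velocity $v$ produced at $c$ by a detailed-balance system. As $w^{\rm eq}$ ranges over the non-negative orthant, the detailed-balance velocity $\sum_r \gamma_r w_r^{\rm eq}\bigl[\prod_i (c_i/c_i^{\rm eq})^{\alpha_{ri}}-\prod_i (c_i/c_i^{\rm eq})^{\beta_{ri}}\bigr]$ sweeps out precisely the conic hull $\mathbb{Q}_{\rm DB}(c)$, because each summand is a non-negative multiple of $\gamma_r\,\mathrm{sign}[\cdots]$. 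Hence Theorem \ref{HTheoremNew} is equivalent to the inequality $(H_{\Gamma}^*)'(c;v)\le 0$ holding for all positive $c$ and all $v\in\mathbb{Q}_{\rm DB}(c)$.

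Next I would invoke the equivalence $\mathbb{Q}_{\rm CB}(c)=\mathbb{Q}_{\rm DB}(c)$. The velocity of an arbitrary complex-balance mass-action system at $c$ lies in $\mathbb{Q}_{\rm CB}(c)$; by the equivalence it therefore lies in $\mathbb{Q}_{\rm DB}(c)$, i.e. it coincides with the velocity at $c$ of some detailed-balance system with the same equilibrium (reverse reactions adjoined, possibly with zero rate constants). Pairing this velocity with the subdifferential of $H_{\Gamma}^*$ and applying the inequality from the previous step yields $\D H_{\Gamma}^*(c)/\D t=(H_{\Gamma}^*)'(c;\D c/\D t)\le 0$, which is the assertion of the corollary.

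The only delicate point, and the step I would watch most carefully, is the non-smoothness of $H_{\Gamma}^*$: the cone-pairing must be performed with the full subdifferential via the directional derivative rather than with a single gradient, so that the conclusion is valid at the corner points of the polytopal level sets and not merely almost everywhere. Once the criterion is phrased through directional derivatives this presents no real obstacle, since the passage $v\in\mathbb{Q}_{\rm CB}(c)=\mathbb{Q}_{\rm DB}(c)$ is a pointwise statement about velocity vectors that is wholly insensitive to the differentiability of $H_{\Gamma}^*$.
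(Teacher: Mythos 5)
Your argument is correct and follows essentially the same route as the paper's own proof, which deduces the corollary immediately from Theorem \ref{HTheoremNew} together with the local equivalence of cones $\mathbb{Q}_{\rm DB}(c)=\mathbb{Q}_{\rm CB}(c)$ from \cite[Theorem 2]{Gorban2014}. Your additional care with directional derivatives and subdifferentials at the non-smooth points of $H_{\Gamma}^*$ is a sensible elaboration of a detail the paper leaves implicit, but it does not change the substance of the argument.
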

\begin{proof} This corollary follows immediately from Theorem \ref{HTheoremNew} and from 
\cite[Theorem 2  (Local equivalence of detailed and complex balance)]{Gorban2014} because $\mathbb{Q}_{\rm DB}(c)=\mathbb{Q}_{\rm CB}(c)$ for all positive $c$.
\end{proof}

When $c(t)$ approaches an equilibrium, $c^{\rm eq}$, all the partial equilibria also converge to $c^{\rm eq}$. Locally, near the equilibrium $c^{\rm eq}$, this picture can be simplified. Let us use quadratic approximation to $H(c)$. The entropic inner product is defined as $\langle x|y\rangle=(x,(D^2H)_{c^{\rm eq}}y)$, where  $(D^2H)_{c^{\rm eq}}$ is the second differential of $H$ at the equilibrium (Hessian matrix) and $(\ , \ )$ is the standard inner product.

\begin{remark}[Universal Lyapunov functions for linearized kinetic equations]
 In the entropic inner product, the quadratic approximation to the classical $H$-function near equilibrium $c^{\rm eq}$ is $$H(c)-H(c^{\rm eq})\approx \frac{1}{2}\langle c-c^{\rm eq}|c-c^{\rm eq}\rangle.$$ For this approximation, elementary linear algebra gives
\begin{equation}\label{LinApprox}
\begin{split}
&c^*_{\gamma}(c)-c^{\rm eq} =  c-c^{\rm eq} - \frac{\langle c-c^{\rm eq} |\gamma\rangle \gamma}{\langle \gamma | \gamma \rangle};\\
&H_{\gamma}^*(c)-H(c^{\rm eq})= \frac{1}{2} \langle c^*_{\gamma}(c)-c^{\rm eq}| c^*_{\gamma}(c)-c^{\rm eq}\rangle = H(c)-H(c^{\rm eq})-\frac{1}{2}\frac{\langle c-c^{\rm eq} |\gamma\rangle^2}{\langle \gamma | \gamma \rangle};\\
&H_{\Gamma}^*(c)-H(c^{\rm eq})= \frac{1}{2}\langle c-c^{\rm eq}|c-c^{\rm eq}\rangle -\frac{1}{2} \min_{\gamma \in \Gamma }\left\{\frac{\langle c-c^{\rm eq} |\gamma\rangle^2}{\langle \gamma | \gamma \rangle}\right\}.
\end{split}
\end{equation}
Formula (\ref{LinApprox}) gives universal Lyapunov functions for linearized kinetic equations with the given reaction mechanism and detailed balance conditions.
\end{remark}

Thus, the new family of universal Lyapunov functions for chemical kinetic equations with detailed or complex balance is constructed.

\section{Outline of possible generalizations and applications \label{General:Sec}}

In this section, we briefly outline the possible generalizations and applications of the main results. This is basically a `to do' section and detailed analysis of all  generalizations and applications is beyond the scope of the paper.

\paragraph{Generalized mass action law} Analysis of the proven statements demonstrate that two properties of $H$ are used: strict convexity  in directions $\gamma$ in positive orthant $\mathbb{R}_{>0}$ and logarithmic singularity of the $\partial H/\partial c_i$ when $c_i\to 0$. One more requirement is the generalized mass action laws for the reaction rates:
$$w_r(c)=\phi_r \exp(\nabla H,\alpha_r),$$   
 where $ \exp(\nabla H,\alpha_r)$ is the `Boltzmann factor', and $\phi_r \geq 0$ is a non-negative quantity, `kinetic factor' \cite{GorbanShahzad2011}. 
Detailed balance and complex balance conditions for the generalized mass action law have the same form as for perfect systems. For detailed balance such kinetics was introduced by Feinberg \cite{Feinberg1972_a} (Marselin--De Donder kinetics, see also \cite{Angeli2009}). Grmela studied  properties of geometry of nonlinear non-equilibrium thermodynamics for the  generalized mass action law \cite{Grmela2012}. Detailed balance in the limit of non-equilibrium reactions was analyzed in {\cite{GorbanMirYab2013}. General analysis of non-classical entropies and their relations to the second law of thermodynamics was presented in work \cite{Gorban2014}. Complex balance conditions for macroscopic kinetics was proven for systems with Markov microscopic description in \cite{GorbanShahzad2011}. The basic constructions of partial equilibria and functions $H^*_{\Gamma}$ can be extended to the generalized mass action law.

\paragraph{Non-isochoric and non-isothermal conditions}

The Lyapunov function (\ref{LyapFreeEN}) and kinetic equations (\ref{kinurChem}) with mass action low (\ref{GenMAL}) are valid for isochoric isothermal conditions (constant volume and temperature). For other conditions they should be modified. Such a modification is rather simple and does not add any substantial change but the equation and Lyapunov functions have different analytic form (see, for example, \cite{Yab,Hangos2010}). The analysis of Lyapunov functions for these conditions can be provided in the formalism of the generalized mass action law. A general form for the description of non-isothermal reactions in closed chemical systems in terms of the Marcelin--De Donder kinetics and explicit form of the $H$-function for the systems with detailed balance under various conditions is presented in \cite{Bykov1982}.

\paragraph{Spatially distributed systems}

Transport in Boltzmann's equation is conservative (the free flight) and dissipative terms are local (the collision integral). Generalization of our approach to the collision integral has to be done. In general, models of complex transport processes can include both dissipative and conservative terms. There are many attempts to create thermodynamic theory of such processes. In the GENERIC approach, conservative and dissipative components are explicitly separated with some commutativity conditions between them \cite{GENERIC}. Generalization of proposed construction onto dissipative components of transport processes seems to be a challenging  task. If the dissipative part is described in the language of quasi-chemical formalism with a finite number of steps \cite{GorbanSargsyan2011} then this generalization is more straightforward.

\paragraph{Possible applications}

What can add the new Lyapunov functions  to research tools? Of course, more Lyapunov functions are better than less. The non-classical Lyapunov functions for linear systems are widely used for various estimates and information analysis in the situations, where linear Markov chains could serve as adequate models of information transformation. Just for example of various applications, we can refer to works  \cite{Taneja2004,Rahmani2012,Nowozin2016}. Universal Lyapunov functions are instruments for evaluation of possible dynamics when the reaction rate constants are unknown or   highly uncertain. Without any knowledge of the reaction mechanism we use the thermodynamic potentials for evaluation of the attainable sets of chemical reactions (the theory and algorithms are presented in \cite{Gorban1984,GorbanSIADS2013,FeinberHildenbrandt1997}, some industrial applications are discussed in \cite{GorbanKagan2006,Hildebrandt1990,Metzger2009}). Convexity allows us to transform the $n$-dimensional problems about attainability and attainable sets into an analysis of one-dimensional continua and discrete objects, thermodynamic trees \cite{GorbanSIADS2013}. When the reaction mechanism is known, we can use this information about mechanism for sharper estimations. A new class of estimates is needed and the new Lyapunov functions give a collection of instruments for such estimates. The use of many Lyapunov functions in the analysis of attainable sets makes these estimates more narrow and close to reality.

\section{Conclusion}

\subsection{New Lyapunov functions: the main result}

A rich family of universal Lyapunov functions for any linear or non-linear reaction network with detailed or complex balance is presented. Consider a mass action law system with a given reaction mechanism, the set of stoichiometric vectors $\Upsilon$ and detailed or complex balance. For construction of the new Lyapunov functions for this system the following operations are needed. 
 
\begin{enumerate}
\item Normalize vectors from $\Upsilon$. Let $\Upsilon_0$ be the set of normalized stoichiometric vectors.
\item Select a finite set of vectors $\Gamma$ such that $\Upsilon_0 \subset \Gamma$ and each vector from $\Gamma$ has both positive and negative components.
\item For each  $\gamma \in \Gamma$ calculate $$H^*_{\gamma}(c)=\min_{c+\gamma x \in\mathbb{R}_{>0}^n}H(c+ \gamma  x).$$
\item Find $$H^*_{\Gamma}(c)=\max_{\gamma \in \Gamma}H^*_{\gamma}(c).$$
\end{enumerate}
Thus, we have to perform several smooth one-dimensional convex minimizations and then select the maximum of these minima.

Theoretically, $H^*_{\Gamma}$  is a Lyapunov function for any {\em compact} $\Gamma$. Such a function can be approximated (from below) by the functions calculated for finite $\varepsilon$-networks on $\Gamma$. 

 Functions $H^*_{\Gamma}$  are not presented as elementary functions even if $H$ has explicit analytical form. If the reaction mechanism consists of mono- and bimolecular reactions  then the partial equilibria can be represented by radicals and $H^*_{\gamma}(c)$ can be written as an elementary function but in general case the less implicit form with min and max operations is necessary. If we substitute $H$ by its quadratic approximation near equilibrium then functions $H^*_{\Gamma}$  can be represented more explicitly (\ref{LinApprox}) and the difference between $H$ and $H^*_{\Gamma}$ becomes obvious.

The mystery about the  fundamental difference between the rich family of Lyapunov functions for linear networks and a very limited collection of Lyapunov functions for non-linear networks in thermodynamic conditions is resolved: there is no such crucial difference anymore.

\subsection{An open problem}

There remains an obvious difference between explicit analytic expression of $f$-divergences (\ref{F-div}) and not so obvious construction of Lyapunov functions for general networks using partial equilibria of non-linear reactions. For linear reactions, the partial equilibria have very  simple analytic expression, for bimolecular reactions they are given explicitly using quadratic formula, but for trimolecular reactions the analytic formulas become too bulky.

Alt least one important question is still open. The new Lyapunov functions $H_{\Gamma}^*$ are, at the same time, universal Lyapunov functions for linear kinetics, if the stoichiometric vectors of the linear reaction mechanism $A_i \rightleftharpoons A_j$ are included in $\Gamma$. Due to the results of \cite{ENTR3,Amari2009}, such a function should be, essentially,  a $f$-divergence (\ref{F-div}) $H_f(c)$, or, more precisely, it should be a monotonic function of $H_f(c) +\lambda_i c_i$ for some constant $\lambda$. Nevertheless, now we know nothing about these $f$-divergences except their existence. Constructive transformation of   $H_{\Gamma}^*$ into $f$-divergence is desirable because an explicit form (\ref{F-div}) brings some benefits for analysis.

\section*{Acknowledgments}
The work was supported by the University of Leicester and the Ministry of  Science and Higher Education of the Russian Federation (Project No. 14.Y26.31.0022).


\begin{thebibliography}{----}


\bibitem{Boltzmann1872}Boltzmann L. Weitere Studien \"uber das     W\"armegleichgewicht     unter Gasmolek\"ulen. {Sitzungsber Keis Akad Wiss} {1872};{66}:275--370.

\bibitem{Boltzmann1887}{Boltzmann} L. {\em Neuer Beweis zweier S\"atze \"uber das     W\"armegleichgewicht unter mehratomigen Gasmolek\"ulen}. {Sitzungsber Keis Akad Wiss} 1887;{95}(2):{153--164}.


\bibitem{Angeli2009}Angeli D. A tutorial on chemical network dynamics. Eur. J. Control, 2009;15:398--406.



\bibitem{Shannon1948}Shannon CE. A mathematical theory of communication. {Bell   Syst Tech J} {1948};{27}:379--423,623--656.

\bibitem{Cohen1993} Cohen JE, Derriennic Y, Zbaganu GH.  Majorization, monotonicity of relative entropy and stochastic matrices. {Contemp Math} {1993};{149}:251--59.

\bibitem{CohenIwasa1993}Cohen JE, Iwasa Y, Rautu G, Ruskai MB, Seneta E,
    Zbaganu G. Relative entropy under mappings by stochastic matrices. {Linear Algebra Appl} {1993};{179}:211--35.



\bibitem{Renyi1961}R\'enyi A. On measures of entropy and information. In { Proceedings of the 4th Berkeley Symposium on Mathematics, Statistics and Probability 1960, Volume 1}; Berkeley, CA, USA:     University of California Press; 1961;  pp. 547--61.


\bibitem{Csiszar1963}Csisz\'ar I. Eine informationstheoretische Ungleichung und ihre Anwendung auf den Beweis der Ergodizit\"at von Markoffschen Ketten. {Magyar Tud Akad Mat Kutato Int Kozl} {1963};{8}:85--108.

\bibitem{Morimoto1963}Morimoto T. Markov processes and the $H$-theorem. {J Phys Soc Jap} {1963};{12}:328--331.


\bibitem{ENTR3}Gorban PA.  Monotonically equivalent entropies and solution of additivity equation. {Physica A} {2003};{328}:380--390.

\bibitem{Amari2009}Amari, S.-I. {Divergence, Optimization, Geometry}. In { Proceedings of the 16th International Conference on Neural Information Processing}, Bangkok, Thailand, 1--5 December 2009; \mbox{Leung, C.S.}, Lee, M., Chan, J.H., Eds.;  Berlin, Germany:  Springer; { 2009}; pp. 185--93.

\bibitem{Gorban2Judge2010}Gorban AN, Gorban PA, Judge G. Entropy: the Markov ordering approach. Entropy  2010;12(5):1145--1193.


\bibitem{MarinYablonsky2019}Marin GB, Yablonsky GS, Constales D. Kinetics of chemical reactions: Decoding complexity. Weinheim, Germany: Wiley-VCH; 2019.

\bibitem{Yab}Yablonskii GS, Bykov VI, Gorban AN, Elokhin VI,  Kinetic models of catalytic reactions. Amsterdam: Elsevier; 1991.

\bibitem{Gorban1984}Gorban AN. Equilibrium Encircling. Equations of Chemical Kinetics and Their Thermodynamic Analysis. Novosibirsk: Nauka; 1984.

\bibitem{Chernikova1965} Chernikova NV. An algorithm for finding a general formula for nonnegative solutions of system of linear inequalities, USSR Comput. Math. Math. Phys., 5 (1965), pp. 228--233.

\bibitem{FukudaProdon2996} Fukuda K, Prodon A. Double description method revisited, in Combinatorics and Computer Science, Lecture Notes in Comput. Sci. 1120, Springer-Verlag, Berlin, 1996, 91--111.


\bibitem{GorbanKagan2006}Gorban AN, Kaganovich BM, Filippov SP, Keiko AV,  Shamansky VA, Shirkalin IA. Thermodynamic Equilibria and Extrema: Analysis of Attainability Regions and Partial Equilibria. New York: Springer; 2006.


\bibitem{GorbanSIADS2013}Gorban AN. Thermodynamic Tree: The Space of Admissible Paths. SIAM J Appl Dyn Syst 2013;12(1):246--278.

\bibitem{Pisarenko2018}Pisarenko YA, Yakhyaev MA, Serna-Loaiza S, Cardona CA. Stoichiometric restrictions on operating modes in chemical technology. Chem. Eng. Sci. 2018;192:642--654.


\bibitem{HangosSzeder2013}Hangos K, Szederk\'{e}nyi G. The effect of conservation on the dynamics of chemical reaction networks, In IFAC Workshop on Thermodynamic Foundations of Mathematical Systems Theory, July 13-16, Lyon, France. , pp. 30--35.

\bibitem{Tolman1938}Tolman RC. The Principles of Statistical Mechanics. London: Oxford University Press;  1938.

\bibitem{Zeld}Zeldovich YB.  Proof of the uniqueness of the solution of the equations of the law of mass action. In: Selected Works of Yakov Borisovich Zeldovich; Volume 1, Ed. by JP Ostriker; Princeton, NJ, USA: Princeton Univ. Press; 1996; pp. 144--148.

\bibitem{ShapiroShapley1965}Shapiro NZ, Shapley LS.  Mass action law and the Gibbs free energy function, SIAM J Appl Math 1965;16:353--375.

\bibitem{VolpertKhudyaev1985}Volpert AI,  Khudjaev SI. Analysis in classes of discontinuous functions and equations of mathematical physics. Dordrecht, The Netherlands: Nijhoff; 1985.
    (Translation from the 1st Russian ed. Moscow: Nauka publ.; 1975.)


\bibitem{Lorentz1887} Lorentz H-A. {\"{U}ber das Gleichgewicht der lebendigen Kraft  unter     Gasmolek\"{u}len.} Sitzungsber Keis Akad Wiss 1887;95(2):115--152.

\bibitem{CercignaniLampis1981}Cercignani C, Lampis M. {On the $H$-theorem for polyatomic gases}. J Stat Phys 1981;26(4):795--801.


\bibitem{Horn1972}Horn FJM. Necessary and sufficient conditions for
complex balancing in chemical kinetics. Arch Ration
Mech Anal 1972;49:172--186.


\bibitem{Feinberg1972}Feinberg M. {Complex balancing in general kinetic systems.} Arch Rat Mechan Anal 1972;49:187--194.

\bibitem{Szederkenyi2011}Szederk\'{e}nyi G, Hangos KM. Finding complex balanced and detailed balanced realizations of chemical reaction networks. J Math Chem 2011;49(6):1163--79.

\bibitem{GorbanShahzad2011}Gorban AN, Shahzad M. {The   Michaelis-Menten-Stueckelberg     theorem.} {Entropy} 2011;{13}:{966--1019}.

\bibitem{Hangos2018}Lipt\'{a}k G, Hangos KM, Pituk M, Szederk\'{e}nyi G. Semistability of complex balanced kinetic systems with arbitrary time delays. Syst Control Lett 2018;114:38--43.


\bibitem{Sontag1999}Sontag ED. Stability and stabilization: discontinuities and the effect of disturbances. In Nonlinear analysis, differential equations and control 1999 (pp. 551--598). Springer, Dordrecht.

\bibitem{Clarke2001}Clarke F. Nonsmooth analysis in control theory: a survey. European Journal of Control. 2001;7(2-3):145--59.

\bibitem{Gorban2014}Gorban AN. General H-theorem and entropies that violate the second law. Entropy 2014;16(5):2408--32.

\bibitem{Gorban2013Markov}Gorban AN. Local equivalence of reversible and general Markov kinetics, Physica A 2013;392:1111--1121.


\bibitem{GorbanBykYab1986}Gorban AN, Bykov VI, Yablonskii GS.
Thermodynamic function analogue for reactions proceeding without interaction of various substances, Chem Eng Sci  1986. Vol. 41, No. 11. P. 2739--2745.

\bibitem{Henk2004}Pavlov A, Pogromsky A, van de Wouw N, Nijmeijer H. Convergent dynamics, a tribute to Boris Pavlovich Demidovich. Syst Control Lett 2004;52(3-4):257--261.

\bibitem{Angeli2007}De Leenheer P, Angeli D, Sontag ED. Monotone chemical reaction networks. J Math Chem  2007;41(3):295--314.

\bibitem{Banaji2013}Banaji M, Mierczy\'{n}ski J. Global convergence in systems of differential equations arising from chemical reaction networks. J Differ Equ 2013;254(3):1359--1374.

\bibitem{Sontag2007}Sontag ED. Monotone and near-monotone biochemical networks. 	Syst. Synth. Biol. 2007;1(2):59--87. 

\bibitem{Jaynes}E.T. Jaynes, Information Theory and Statistical Mechanics, I; II Phys. Rev. 1957 106, 620–63; 108, 171--190.

\bibitem{Grmela2013}Grmela, M. (2013). Role of thermodynamics in
multiscale physics. Computers and Mathematics with
Applications, 65(10), 1457--1470.

\bibitem{Grmela2016}Grmela, M. (2016). Externally driven macroscopic
systems: Dynamics versus thermodynamics. Journal
of Statistical Physics, 166(2), 282--316.

\bibitem{JouLebon1996}Jou D, Casas-Vázquez J, Lebon G. Extended irreversible thermodynamics. Springer, Berlin, Heidelberg. 1996.

\bibitem{GorbanKarlin2005}Gorban AN,  Karlin IV.  Invariant manifolds for physical and chemical kinetics (Vol. 660). Springer,  Berlin, Heidelberg. 2005.

\bibitem{Rockafellar}Rockafellar RT. Convex Analysis. Princeton, NJ, USA: Princeton University Press; 2015.

\bibitem{Feinberg1972_a}Feinberg M. {On chemical kinetics of a certain class}. {Arch Rat Mechan Anal} {1972};{46}:1--41.

\bibitem{Grmela2012}Grmela  M. Fluctuations in extended mass-action-law dynamics. {Physica D} {2012};{241}:976--986.

\bibitem{GorbanMirYab2013} Gorban AN, Mirkes EM, Yablonsky GS. Thermodynamics in the limit of irreversible reactions, Physica A 2013;392:1318--1335.

\bibitem{Hangos2010}Hangos KM. Engineering model reduction and entropy-based Lyapunov functions in chemical reaction kinetics. {Entropy} {2010};{12}:772--797.

\bibitem{Bykov1982}Bykov VI, Gorban AN, Yablonskii GS, Description of nonisothermal reactions in terms of Marcelin-De Donder kinetics and its generalizations, React. Kinet. Catal. Lett. 1982;20(3-4):261--265.

\bibitem{GENERIC}Grmela M, \"Ottinger HC.  Dynamics and thermodynamics of complex  fluids. I. Development of a general formalism. { Phys Rev E} {1997};{56}:6620.

\bibitem{GorbanSargsyan2011}Gorban AN, Sargsyan HP, Wahab HA. Quasichemical models of multicomponent nonlinear diffusion. Math Model Nat Phenom 2011;6(5):184--262.
\bibitem{Taneja2004}Taneja IJ, Kumar P. Relative information of type s, Csisz\'{a}r's f-divergence, and information inequalities.  Inf. Sci.  2004;166(1-4):105--25.

\bibitem{Rahmani2012}Rahmani H, Sahli N, Kamoun F. DDoS flooding attack detection scheme based on F-divergence. Comput. Commun. 2012;35(11):1380--91.

\bibitem{Nowozin2016}Nowozin S, Cseke B, Tomioka R. f-gan: Training generative neural samplers using variational divergence minimization. In Proceedings of 30th Conference on Neural Information Processing Systems (NIPS 2016), Barcelona, Spain, Advances in Neural Information Processing Systems 2016; pp. 271--279.

\bibitem{FeinberHildenbrandt1997}Feinberg M, Hildebrandt D. Optimal reactor design from a geometric viewpoint--I. Universal properties of the attainable region. Chem. Eng. Sci. 1997;52:1637--1665.

\bibitem{Hildebrandt1990}Hildebrandt D,  Glasser D. The attainable region and optimal reactor structures. Chem Eng Sci 1990;45:2161--2168.

\bibitem{Metzger2009}Metzger MJ, Glasser D, Hausberger B, Hildebrandt D,   Glasser BJ. Use of the attainable region analysis to optimize particle breakage in a ball mill. Chem. Eng. Sci. 2009;64:3766--3777

\end{thebibliography}
\end{document}